\newtheorem{theorem}{Theorem}
\newtheorem{lemma}[theorem]{Lemma}
\newtheorem{definition}[theorem]{Definition}
\newtheorem{fact}[theorem]{Fact}
\newtheorem{conjecture}{Conjecture}
\newtheorem*{theorem*}{Theorem}
\newcommand{\hs}{\mathcal{H}}
\newcommand{\dm}{\mathcal{D} (\hs_A\otimes\hs_B)}
\newcommand{\PPT}{PPT}
\newcommand{\SEP}{SEP}
\newcommand{\CC}{CC}
\newcommand{\PROD}{Prod}
\DeclareMathOperator{\Tr}{\mathrm{Tr}}
\def\identity{\leavevmode\hbox{\small1\kern-3.8pt\normalsize1}}
\newcommand{\pt}[1]{#1{}^{T_B}}
\newcommand{\pospart}[1]{{\left(#1\right)}_+}
\newcommand{\negpart}[1]{{\left(#1\right)}_-}
\newcommand{\half}{\frac{1}{2}}
\newcommand{\norm}[1]{\left\| #1 \right\|_1}
\newcommand{\ket}[1]{\left | #1 \right\rangle}
\newcommand{\bra}[1]{\left \langle#1 \right |}
\newcommand{\proj}[1]{\ket{#1}\bra{#1}}
\begin{document}

\title{Hierarchy of correlation quantifiers comparable to negativity}

\author[1,2]{Ray Ganardi}
\orcid{0000-0002-0880-7816}
\email{ray.ganardi@ug.edu.pl}

\author[3]{Marek Miller}
\orcid{0000-0002-1850-4499}

\author[1]{Tomasz Paterek}
\orcid{0000-0002-8490-3156}

\author[2]{Marek \.Zukowski}
\orcid{0000-0001-7882-7962}

\affil[1]{Institute of Theoretical Physics and Astrophysics, Faculty of Mathematics, Physics, and Informatics, University of Gdańsk, 80-308 Gdańsk, Poland}
\affil[2]{International Centre for Theory of Quantum Technologies (ICTQT), University of Gdańsk, 80-308 Gdańsk, Poland}
\affil[3]{Centre of New Technologies, University of Warsaw, 02-097 Warsaw, Poland}

\maketitle

\begin{abstract}
Quantum systems generally exhibit different kinds of correlations.
In order to compare them on equal footing, one uses the so-called distance-based approach where different types of correlations are captured by the distance to different sets of states.
However, these quantifiers are usually hard to compute as their definition involves optimization aiming to find the closest states within the set.
On the other hand, negativity is one of the few computable entanglement monotones, but its comparison with other correlations required further justification.
Here we place negativity as part of a family of correlation measures that has a distance-based construction.
We introduce a suitable distance, discuss the emerging measures and their applications, and compare them to relative entropy-based correlation quantifiers.
This work is a step towards correlation measures that are simultaneously comparable and computable.
\end{abstract}

\section{Introduction}

Correlations between quantum particles come in many different flavours.
In the simplest case of bipartite systems, we typically distinguish classical correlations, quantum discord, quantum entanglement, quantum steering, and Bell nonlocality~\cite{Adesso_2016}.
These correlations form a hierarchical structure~\cite{Jirakova_2021}, can be measured by a variety of quantifiers~\cite{Horodecki_2009a,Modi_2012}, which even for entanglement are generally not comparable as they lead to different ordering of entangled states~\cite{Virmani_2000,Miranowicz_2004}.
How do we then quantify discord and classical correlations in a way comparable to a given entanglement monotone?

A common answer to this question is to define the amount of correlation as a distance to a set of uncorrelated states~\cite{Modi_2010,Vedral_1997}, in the same spirit as monotones introduced in the framework of resource theories~\cite{Chitambar_2019}.
For example, the distance from a given state to the set of separable states quantifies the amount of quantum entanglement,
the distance to the so-called classical states defines quantum discord, and the distance to product states gives rise to total correlation.
These quantifiers are comparable because the same distance measure is used and one only modifies the set of uncorrelated states.
When the distance is equipped with an operational interpretation, the resulting correlation quantifier also inherits this operational interpretation.

A concrete example of this framework in action is provided by the hierarchy of correlations defined using relative entropy as the distance.
Although relative entropy is technically not a metric, it admits operational interpretation~\cite{Vedral_1997a} and leads to quantifiers known from information theory.
Namely, the distance to the product states is given by the mutual information.
Furthermore, quantum relative entropy is monotonic under completely positive trace preserving (CPTP) maps, making the resulting quantifiers correlation monotones automatically.

The problem is that such defined measures are hard to compute.
Relative entropy of entanglement has a closed expression only for a very limited number of cases~\cite{Horodecki_2009a,Miranowicz_2008a,Friedland_2011} and in general the closest distance to the set of separable states (as well as to the set of classical states) is hard to find.
The problem is not just with relative entropy --- in fact for a long time, the only computable entanglement monotone was negativity (see Sec. XV.F in Ref.~\cite{Horodecki_2009a} and comments on concurrence and $\kappa$-entanglement below).

Negativity is defined as the sum of negative eigenvalues after the state of a bipartite system is partially transposed~\cite{_yczkowski_1998,_yczkowski_1999,Vidal_2002,Plenio_2005,Lee_2000}.
Operationally, negativity features in bounds on various quantum information measures: an upper bound to distillable entanglement~\cite{Vidal_2002}, a lower \emph{and} an upper bound to entanglement cost under PPT operations~\cite{Audenaert_2003,Ishizaka_2004}, and a lower bound on entanglement dimension~\cite{Eltschka_2013}.
For a large class of states~\cite{Audenaert_2003} (including two qubit states~\cite{Ishizaka_2004}), the relation to entanglement cost under PPT operations is even exact.

The question we attack here is quantification of other correlations in a way comparable to negativity, so that statements such as ``entanglement is half of total correlations'' are meaningful.
We follow the distance-based approach and introduce the partial transpose distance which equals negativity for a broad class of states --- we conjecture based on extensive numerical evidence that the equality holds for any state.
We discuss properties of the hierarchy of quantifiers obtained using this distance and compare them to relative entropy-based quantifiers.
We also present exact expressions of the resulting quantifiers for certain classes of states and discuss their applications to detection of non-decomposability and non-Markovianity of dynamics.

\subsection*{Related works}

The issue of comparable correlations has been addressed through the distance approach in multiple works~\cite{Adesso_2016}.
For example, Refs.~\cite{Bromley_2014,Roga_2016} study the correlations measured by Bures and Hellinger distance and Ref.~\cite{Paula_2013} studies correlations measured by trace distance.
Although these studies are clearly useful, the resulting measures are hard to compute.
They could be estimated by sampling uncorrelated states and computing the distance between the state of interest and the sampled state.
This way we get upper bounds on correlation quantifiers and in principle the state of interest might not even be correlated.
The current work complements these approaches by providing an easily computable lower bound to entanglement, that is negativity.

Perhaps the most prominent example of the distance approach is given by relative entropy-based quantifiers.
In this context one might think that since negativity captures whether partially transposed state fails to be a state, a natural ``entanglement'' quantifier would be relative entropy to the set of states which admit positive partial transposition (PPT states)~\cite{Rains_2001,Rains_1999}.
Recently, an approximation of relative entropy by semidefinite program (SDP) was discovered~\cite{Fawzi_2018,Fawzi_2018a}, allowing the computation of relative entropy to PPT states.
The approach is based on a numerical approximation of the matrix logarithm, which allows the relative entropy to be expressed as an SDP\@.
The present work complements this by considering a different distance measure, for which the correlations are comparable to negativity.

Another general approach to defining a hierarchy of correlations is through the use of base norms~\cite{Regula_2017}.
Indeed, the relation between negativity and base norms was noticed since the introduction of negativity~\cite{Vidal_2002}.
Base norms are closely related to robustness~\cite{Vidal_1999a} and measures defined in this fashion satisfy desirable properties.
Unfortunately, they also suffer from the difficult optimization problem.
Additionally, they do not capture the degree of total correlation between different pure entangled states (see Appendix~\ref{app:robustness}).
The main issue is base norms are defined to quantify distance to convex sets, while the set of product states is manifestly non-convex.
Thus any hierarchy of correlations defined within this framework cannot include total correlation.

Ref.~\cite{Wang_2020} introduced a computable entanglement monotone called $\kappa$-entanglement, that quantifies the PPT exact entanglement cost.
In Ref.~\cite{Wang_2020b}, negativity and $\kappa$-entanglement were shown to be members of a family of entanglement measures called $\alpha$-negativity.
These measures are related to the sandwiched $\alpha$-Rényi relative entropy distance to the set of PPT states.
By taking this distance to other sets of uncorrelated states, one could define a family of comparable quantifiers for any $\alpha$.
However, it turns out that taking the limit $\alpha \to 1$ collapses all obtained quantifiers to one and the same value, given by the negativity.
In other words this approach does not differentiate between the different kinds of correlations in a quantum state.

Another computable entanglement monotone is concurrence~\cite{Hill_1997} and it is interesting to ask if correlation monotones on equal footing to concurrence could be defined via the distance approach.
Indeed concurrence is equal to the Hilbert-Schmidt distance for a restricted set of states~\cite{0712.1015v2}.
This suggests that quantifiers of other correlations could be defined with the help of the Hilbert-Schmidt distance.
In this way Refs.~\cite{Girolami_2011,Debarba_2012} show a relation between geometric discord and negativity.
However, Hilbert-Schmidt distance is well-known to be non-contractive under CPTP operations~\cite{Ozawa_2000,Piani_2012}.
This leads to difficulties in interpreting the resulting quantifiers as a measure of correlation.
In contrast, present work introduces the family of correlations using the distance that is contractive.

\subsection*{Notation}

We denote a Hilbert space by $\hs$, and $\mathcal{D}\left(\hs\right)$ denotes the set of states (density operators).
Given a bipartite system $\hs_A \otimes \hs_B$, we denote partial transpose on the computational basis of $\hs_B$ as $T_B$.
The set of all partially transposed states is denoted $PT$.
The trace norm of a matrix $A$ is defined as $\norm{A} = \Tr{\sqrt{A^\dagger A}}$.
Given a Hermitian matrix $A$, there exists a Jordan decomposition into its positive part and negative part, i.e. $A = \pospart{A} + \negpart{A}$, with $\pospart{A} > 0$, $\negpart{A} < 0$, and $\pospart{A} \negpart{A} = 0$.

We denote the maximally entangled state as $\ket{\Phi} = \sum_i \frac{1}{\sqrt{d}} \ket{ii}$.
A state $\rho$ is PPT if $\pt{\rho} \geq 0$, and we denote the collection of such states as $\PPT$.
A state $\rho$ is classically correlated if $\rho = \sum_{ij} p_{ij} \proj{a_i b_j}$ for some probabilities $p_{ij}$ and $\{\ket{a_i}\}, \{\ket{b_j}\}$ orthonormal bases of the local Hilbert spaces.
We denote the collection of classically correlated states as $\CC$.
We denote the collection of product states as $\PROD$.
The negativity of a state $\rho$ is defined as $N(\rho) = \half \left(\norm{\pt{\rho}} - 1\right)$.

\section{Partial transpose distance}

We first introduce the partial transpose distance and discuss its main properties.
Then we explore its connection to negativity.

\begin{definition}
	Let $\rho, \sigma \in \dm$ be two density matrices.
	The partial transpose distance $d_T(\rho, \sigma)$ is defined as
	\begin{align}
		d_T(\rho, \sigma)
		&= \half \norm{\pt{\rho} - \pt{\sigma}}.
	\end{align}
\end{definition}

Although the definition of partial transpose is basis-dependent, trace distance is invariant under unitaries.
Therefore the partial transpose distance does not depend on the choice of basis, or which subsystem we perform the transposition on.

Note that the partial transpose distance is closely related to trace distance, and inherits many of its properties.
For example, since the trace distance is a metric and partial transposition is a linear operation, the partial transpose distance is also a metric.
It admits an operational interpretation in the context of data hiding.
Recall the setting of Helstrom's theorem~\cite{helstrom1976quantum}: two states $\rho$ and $\sigma$ occur with equal probability $p = \half$, and the task is to unambiguously discriminate them.
This means that an agent must choose a positive operator-valued measure (POVM) $\{ M, \identity - M \}$ corresponding to a generalized measurement, such that the outcome $M$ occurs when the actual state is $\rho$, and the outcome $\identity - M$ occurs when the actual state is $\sigma$.
The goal is to choose the measurement to maximize the probability of success, shown by Helstrom's theorem to be
\begin{equation}
p_{\textrm{POVM}} \le \half \left(
		1 + \half \norm{\rho - \sigma}
	\right).
\end{equation}
In data hiding, we restrict the set of allowed POVMs~\cite{Terhal_2001,DiVincenzo_2002,Matthews_2009}.
For example, in PPT data hiding we are restricted to PPT POVMs, i.e.\ the POVM elements must be PPT operators~\cite{Rains_1999,Rains_1999a}.
By following the steps that lead to the Helstrom bound, but for PPT POVMs, we find
\begin{equation}
p_{\textrm{PPT}} \le \half \left(
		1 + d_T (\rho, \sigma)
	\right).
\end{equation}
Since the partial transpose distance between two states can be significantly greater than $1$, the bound above is in general not tight and may even be trivial.
However if the states that are chosen for the hiding scheme are PPT, such as the example in~\cite{Eggeling_2002},
then the partial transpose distance is bounded by $1$, and we always get a non-trivial bound.
As an upper bound to this task, we expect that the partial transpose distance is contractive under PPT operations.
Recall that PPT operations (also known as PPT-preserving maps as they map PPT states to PPT states) are defined as maps $\Lambda$, such that both $\Lambda$ and $T_B \circ \Lambda \circ T_B$ are completely positive~\cite{Rains_1999,Rains_1999a}.
These are the appropriate free operations in the theory of PPT entanglement, as they cannot map PPT states to non-PPT states.
The following lemma shows that the partial transpose distance is indeed contractive.
\begin{lemma}
	For any PPT operation $\Lambda$ and density matrices $\rho, \sigma$, we have
	\begin{align}
		d_T\left(\Lambda (\rho), \Lambda (\sigma) \right)
		&\leq
		d_T(\rho, \sigma).
	\end{align}\label{lemma:contractive}
\begin{proof}
	The map $\Lambda$ is a PPT operation, so by definition $T_B \circ \Lambda \circ T_B$ is a CPTP map.
	Using the contractivity of trace norm under positive, trace-preserving maps~\cite{RUSKAI_1994}, we have
	\begin{align*}
		d_T (\rho, \sigma)
		&=
		\half \norm{T_B (\rho - \sigma) }
		\\
		&\geq
		\half \norm{(T_B \circ \Lambda \circ T_B) \circ T_B (\rho - \sigma) }
		\\
		&=
		\half \norm{T_B (\Lambda(\rho) - \Lambda(\sigma))}
		\\
		&=
		d_T( \Lambda(\rho), \Lambda(\sigma) )
	\end{align*}
\end{proof}
\end{lemma}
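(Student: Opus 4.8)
The plan is to reduce the statement to the standard result that the trace norm is contractive under positive trace-preserving maps~\cite{RUSKAI_1994}. The whole argument rests on the partial transpose being an involution, $T_B \circ T_B = \mathrm{id}$, which lets me convert the PPT operation $\Lambda$ into a genuine channel acting on partially transposed operators.

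First I would use linearity of $T_B$ and $\Lambda$ to rewrite both distances in terms of a single Hermitian operator $X = \rho - \sigma$, namely $d_T(\rho,\sigma) = \half \norm{T_B(X)}$ and $d_T(\Lambda(\rho),\Lambda(\sigma)) = \half \norm{T_B(\Lambda(X))}$. Next I would invoke the definition of a PPT operation: the map $\tilde\Lambda := T_B \circ \Lambda \circ T_B$ is CPTP. Composing this identity with $T_B$ on the left and using $T_B \circ T_B = \mathrm{id}$ yields $T_B \circ \Lambda = \tilde\Lambda \circ T_B$, so that $T_B(\Lambda(X)) = \tilde\Lambda(T_B(X))$. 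This is the step that does the real work, since it moves the partial transpose past the operation and exposes the CPTP map $\tilde\Lambda$ acting on the already-transposed operator $T_B(X)$.

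Finally, since $\tilde\Lambda$ is CPTP it is in particular positive and trace-preserving, so contractivity gives $\norm{\tilde\Lambda(T_B(X))} \le \norm{T_B(X)}$; halving recovers $d_T(\Lambda(\rho),\Lambda(\sigma)) \le d_T(\rho,\sigma)$. I do not expect a genuine analytic obstacle here --- the content of the lemma is precisely the bookkeeping that the ``transpose twist'' of a PPT operation is a legitimate positive trace-preserving map. The one point to handle with care is to apply contractivity to $\tilde\Lambda$ rather than to $\Lambda$ itself, since $\Lambda$ need not even be positive, and to recall that the bound of~\cite{RUSKAI_1994} requires only positivity and trace preservation, both of which $\tilde\Lambda$ inherits from being CPTP.
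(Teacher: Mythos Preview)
Your proof is correct and follows essentially the same argument as the paper: set $\tilde\Lambda = T_B \circ \Lambda \circ T_B$, use that it is CPTP by the definition of a PPT operation, and apply the trace-norm contractivity result of~\cite{RUSKAI_1994} to the operator $T_B(\rho-\sigma)$. One minor remark: PPT operations are by definition CPTP, so $\Lambda$ itself \emph{is} positive; the real reason you must apply contractivity to $\tilde\Lambda$ rather than $\Lambda$ is simply that the operator appearing inside the norm is $T_B(X)$, on which $\tilde\Lambda$ is the map that acts.
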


\subsection{Relation to negativity}

In order to describe negativity in terms of the partial transpose distance, let us begin by recalling that negativity captures how much the partially transposed state fails to be a state:
\begin{equation}
\inf_{\sigma \in \dm} \frac{1}{2} \norm{\pt{\rho} - \sigma} = N(\rho).
\label{EQ_N}
\end{equation}
This can be seen by first noting that negativity is the lower bound to the trace distance
\begin{align*}
\inf_{\sigma \in \dm} \frac{1}{2} \norm{\pt{\rho} - \sigma}
& \ge \inf_{\sigma \in \dm} \frac{1}{2} \left( \norm{\pt{\rho}} - \norm{\sigma} \right) \\
& = \frac{1}{2} \left( \norm{\pt{\rho}} - 1 \right) \\
& = N(\rho),
\end{align*}
and then verifying that the following choice of $\sigma$ saturates the lower bound:
\begin{equation}
\sigma_N = \frac{{\pospart{\pt{\rho}}}}{\Tr{{\pospart{\pt{\rho}}}}}.
\label{EQ_SIGMA_N}
\end{equation}
Equation~\eqref{EQ_N} can be written in terms of the partial transpose distance by replacing optimization over the states $\sigma$ by optimization over matrices $A$ whose partial transpose gives the set of states:
\begin{equation}
\inf_{\sigma \in \dm} \frac{1}{2} \norm{\pt{\rho} - \sigma}
= \inf_{\pt{A} \in \dm} d_T(\rho,A).
\end{equation}
Since partial transpose operation is the inverse of itself, the set of matrices $A$ is given by the set of all partially transposed states:
\begin{equation}
\inf_{A \in PT} d_T(\rho,A) = N(\rho).
\end{equation}
Our main claim is that this infimum is also achieved when we replace optimization over all partially transposed states with optimization over a smaller set, composed only of states which admit positive partial transposition.
A related quantity was previously considered in Ref.~\cite{Wang_2020b}, where sandwiched Rényi relative entropy was used instead of trace distance.
Fig.~\ref{FIG_GEO} explains the underlying geometrical picture.
\begin{conjecture}
	Let $\rho$ be a density matrix.
	Then
	\begin{align*}
		\inf_{\sigma \in \PPT} d_T(\rho, \sigma) &= N(\rho).
	\end{align*}\label{conj:neg}
\end{conjecture}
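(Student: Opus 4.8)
The statement splits into an easy inequality and a hard one. For the lower bound, observe that every PPT state $\sigma$ also belongs to $PT$: since $\pt\sigma \ge 0$, the matrix $\pt\sigma$ is itself a legitimate density matrix whose partial transpose is $\sigma$. Hence $\PPT \subseteq PT$, and restricting the infimum to the smaller set can only increase it,
\begin{align*}
	\inf_{\sigma \in \PPT} d_T(\rho, \sigma)
	\ge
	\inf_{A \in PT} d_T(\rho, A)
	= N(\rho),
\end{align*}
where the last equality was established above. So the entire content of the conjecture is the reverse inequality: one must produce PPT states approaching, and by compactness of $\PPT$ actually attaining, the value $N(\rho)$.

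To organize the search, write $M = \pt\rho$ and note that $\sigma \mapsto \pt\sigma$ is an involution mapping $\PPT$ onto itself, so that $\inf_{\sigma \in \PPT} d_T(\rho,\sigma) = \inf_{\tau \in \PPT} \half\norm{M - \tau}$. A PPT minimizer attains $N(\rho)$ precisely when it is also a global minimizer of $\half\norm{M - \tau}$ over \emph{all} states, so I would first characterize this unconstrained set. Using the witness $W = \Pi_+ - \Pi_-$ built from the spectral projectors of $M$, for any state $\tau$ one has $\norm{M - \tau} \ge \Tr\!\left[W(M - \tau)\right] = \norm{M} - \Tr\!\left[(\Pi_+ - \Pi_-)\tau\right]$, and the subtracted term is at most $\Tr\tau = 1$, with equality iff $\tau$ is supported on the range of $\Pi_+$. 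A short computation then shows $\half\norm{M - \tau} = N(\rho)$ exactly on the slab
\begin{align*}
	S = \left\{ \tau : 0 \le \tau \le \pospart{\pt\rho}, \ \Tr\tau = 1 \right\},
\end{align*}
a nonempty compact convex set containing $\sigma_N$. The conjecture is therefore equivalent to the geometric claim $S \cap \PPT \ne \emptyset$: there exists a state sitting under the positive part of $\pt\rho$ that is itself PPT.

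With this reduction the broad class is within reach. Whenever $\sigma_N = \pospart{\pt\rho}/\Tr\pospart{\pt\rho}$ happens to be PPT the claim is immediate; this already covers every $\rho$ whose positive part $\pospart{\pt\rho}$ is separable, e.g. has a product eigenbasis. More generally I would search within the slab by writing $\tau = \pospart{\pt\rho} - X$ with $0 \le X \le \pospart{\pt\rho}$ and $\Tr X = N(\rho)$, so that $\tau$ automatically lies in $S$, and then try to choose $X$ so as to cancel the negativity of $\pt{\left(\pospart{\pt\rho}\right)}$, i.e. to enforce $\pt{\left(\pospart{\pt\rho}\right)} \ge \pt X$.

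The main obstacle is precisely this last step in full generality. There is genuine tension: to remove the partial-transpose negativity of $\pospart{\pt\rho}$ one must subtract an $X$ that is positive and dominated by $\pospart{\pt\rho}$, yet whose partial transpose supplies a negative contribution in just the right place, and these demands pull in opposite directions. I would attack it with the semidefinite-programming dual of $\min_{\tau \in \PPT}\half\norm{M - \tau}$, aiming either to construct a dual-feasible witness certifying the optimal value $N(\rho)$, or to show that any separating hyperplane between $S$ and $\PPT$ contradicts the already-established fact that $N(\rho)$ is the true distance from $M$ to the full state space. Proving that no gap ever opens is the part that resists a general argument, which is why the equality is only conjectured and established for broad classes rather than proved outright.
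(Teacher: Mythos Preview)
Your treatment matches the paper's: the statement is left as a conjecture there as well, with the lower bound obtained by the same triangle-inequality/monotonicity argument, the sufficient condition ``$\sigma_N$ is PPT'' corresponding exactly to the paper's positive-binegativity criterion, and your equivalent reformulation $S\cap\PPT\neq\emptyset$ (i.e.\ existence of a PPT state $\tau$ with $\tau\le\pospart{\pt\rho}$) being precisely the alternative form derived in the paper's Appendix. The only cosmetic difference is that you reach the reformulation via the involution $\sigma\mapsto\pt\sigma$ on $\PPT$ and a direct slab characterization of the trace-norm minimizers, whereas the paper argues block-wise after assuming a PPT minimizer exists; the content is the same.
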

We will now prove this conjecture for a broad class of states and provide the supporting numerical evidence in general.

\begin{figure}[!b]
\centering
\includegraphics[width=0.6\textwidth]{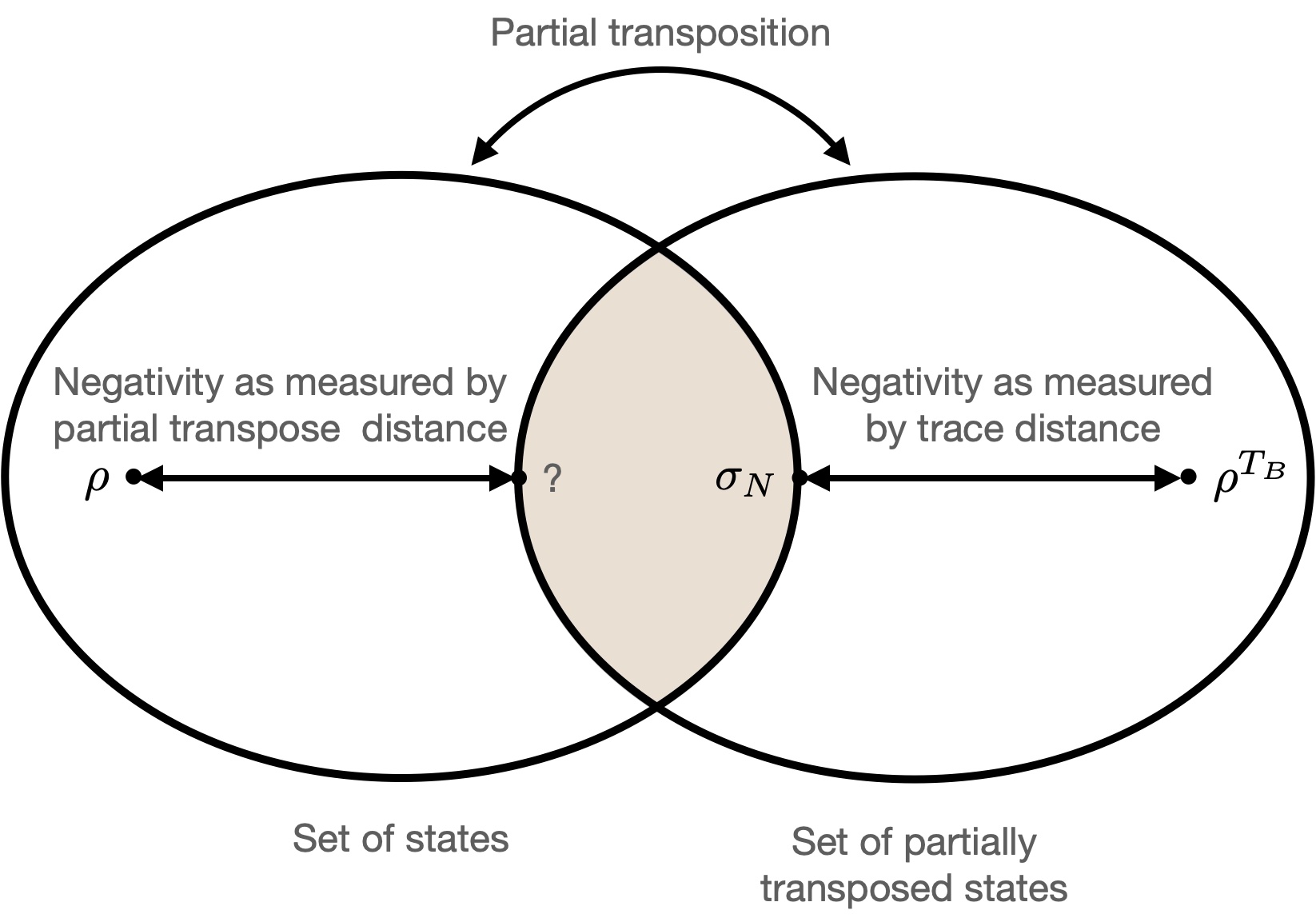}
\caption{\label{FIG_GEO}
Geometrical illustration. Negativity of state $\rho$ is the trace distance from its partial transposition to the set of states, see Eq.~\eqref{EQ_N}.
The closest state is known and given as $\sigma_N$ in the figure, see also Eq.~\eqref{EQ_SIGMA_N}.
We claim that alternatively, negativity can be computed by taking the partial transpose distance from $\rho$ to the set of PPT states (in the shadowed overlap of states and partially transposed states).
This is proven when partial transposition of $\sigma_N$ is positive semi-definite, i.e.\ a state.
In general, the closest PPT state in the partial transpose distance is unknown as depicted with the question mark.
We emphasise that this is just an illustration as partial transposition of $\sigma_N$ can produce matrices with negative eigenvalues, i.e. outside the set of states.
}
\end{figure}

\subsection{Evidence for the conjecture}

The argument with the triangle inequality given above applies just as well when the infimum is over the PPT states showing that negativity is a lower bound on the distance:
\begin{align}
	\inf_{\sigma \in \PPT} d_T(\rho, \sigma) &\ge N(\rho).
\end{align}
Fig.~\ref{FIG_GEO} suggests that a natural candidate for the closest state in the partial transpose distance is partial transposition of $\sigma_N$ introduced in Eq.~\eqref{EQ_SIGMA_N}.
Indeed, whenever $\sigma_N^{T_B}$ is a state the resulting partial transpose distance is given by the negativity as it reduces to Eq.~\eqref{EQ_N}.
This has effectively been shown for the set of Schmidt-correlated states, i.e.\ states of the form $\sum_{ij} a_{ij} \ket{ii} \bra{jj}$, in Ref.~\cite{Khasin_2007}.
Now we note that all the states with positive binegativity~\cite{Audenaert_2002}, i.e.\ such that $\pt{|\pt{\rho}|} \ge 0$, also admit $\sigma_N^{T_B}$ which is positive semi-definite.
Indeed:
\begin{equation}
\pt{\pospart{\pt{\rho}}} = \frac{1}{2} \pt{\left( |\pt{\rho}| + \pt{\rho} \right)} \geq 0.
\end{equation}

Binegativity was introduced as a mathematical tool to bound PPT exact entanglement cost by logarithmic negativity~\cite{Audenaert_2003}.
It is known that all pure states, two-mode Gaussian states~\cite{Audenaert_2002}, and two-qubit states~\cite{Ishizaka_2004} have positive binegativity.
For this broad class of states, the introduced distance is therefore equal to the negativity.
Recall that one of the interpretations of negativity is as a bound on PPT exact entanglement cost~\cite{Audenaert_2003}---for states with positive binegativity they are exactly equal.
Thus negativity has an exact operational interpretation for these states.
It is interesting to note that for these states, negativity is equal to the partial transpose distance to PPT states.

In general, we do not know a closed expression for the partial transpose distance to PPT states.
However, the optimization can be expressed as a semidefinite program (SDP)~\cite{Vandenberghe_1996}.
Recall that the computation of trace norm can be formulated as an SDP~\cite{Watrous_2018}.
Since the set of PPT states is also characterized by linear positivity conditions, the whole optimization can be phrased as an SDP instance and allows us to solve the problem efficiently on a computer.
We have sampled uniformly mixed states of two qudits, according to measure induced by partial tracing~\cite{Zyczkowski_2001,Braunstein_1996,Hall_1998}, for dimensions $d = 2$ to $d=6$.
For each $d$, we sampled $10^6$ mixed states and used the SDP formulation to find the closest distance to the set of PPT states.
In all cases, we have never observed a counterexample to Conjecture~\ref{conj:neg}.

The situation is even more interesting because there exist states for which $\sigma_N^{T_B}$ is not positive semi-definite
and yet in these cases numerical analysis confirms the conjecture.
A simple example of such a state is given by even mixture of pure states $\ket{00} + \ket{01} + \ket{12}$ and $\ket{10} + \ket{21} + \ket{22}$.
In principle, the lack of counterexamples could be attributed to two numerical issues:
the negative eigenvalues of $\pt{\pospart{\pt{\rho}}}$ are very small or the set of counterexamples might be of zero measure.
In any case, the numerical analysis shows that negativity is guaranteed to be a good lower bound for a generic state.

Coming back to the problem of computing the PPT exact entanglement cost, recently a closed form of the quantity was obtained~\cite{Wang_2020}.
It was shown that the PPT exact entanglement cost is given by the $\kappa$-entanglement.
Furthermore, examples of states with $\kappa$-entanglement strictly larger than logarithmic negativity were given.
For these examples, numerical results still confirm that the conjecture is true.

Finally, in Appendix~\ref{app:conjecture-alt} we obtain an alternative form of the conjecture --- Conjecture~\ref{conj:neg} is true if and only if for every state $\rho$ there exists a PPT state $\sigma$ such that $\pospart{\pt{\rho}} \geq \sigma$.
This alternative expression provides a more efficient SDP formulation of the conjecture.
It also provides a partial answer to the following problem discussed in Refs.~\cite{Miranowicz_2008a,Friedland_2011,Girard_2014}: find a set of entangled states that admit a given closest separable state that we now extend to the closest PPT state.
For a PPT state $\sigma$, all entangled states $\rho$ that satisfy $\pospart{\pt{\rho}} \geq \pt{\sigma}$ have $\sigma$ as their closest PPT state.

\section{Hierarchy of correlation quantifiers}

We have shown that negativity is a lower bound on the partial transpose distance to the set of PPT states.
In addition, we provided evidence to the conjecture that the two quantities are equal for all states.
Furthermore, there exists an SDP to compute the distance to PPT states.
Thus we have an easily computable quantifier of PPT entanglement.

We now define the correlation quantifiers on equal footing (measured by the same distance) to PPT entanglement.
\begin{definition}
	Let $\rho$ be a quantum state.
	We define the quantifier of correlation $X$ as
	\begin{align*}
		Q_{X} (\rho)
		&= \inf_{\sigma \in X} d_T \left( \rho, \sigma \right),
	\end{align*}
	where $X$ is the set of uncorrelated states.
	Particular examples are PPT entanglement ($X = \PPT$), non-classical correlation ($X = \CC$), and total correlation ($X = \PROD$).
\end{definition}

Since $d_T\left( \rho, \sigma \right)$ vanishes if and only if $\rho = \sigma$, these measures vanish only on uncorrelated states.
Furthermore, because the partial transpose distance is contractive under PPT operations, it is straightforward to show that these quantities are monotonic under the subset of PPT operations that preserves the set of uncorrelated states (also known as free operations in resource theory parallel).
Since the sets $\PROD \subset \CC \subset \PPT$ form a hierarchy, the corresponding quantifiers are also ordered and comparable.

If the set of uncorrelated states is convex, then the computation of $Q_{X}$ is a convex optimization problem.
We can see this by noticing that the partial transpose distance is jointly convex.
From the general properties of convex problems, it follows that any local minima must be global minima too.
Unfortunately neither the set of product states nor the set of classical states are convex and therefore it is an open problem whether the corresponding distances are efficiently computable.
In any case, one expects this difficulty to be less severe than the computation of entanglement using other distances,
because of the smaller number of parameters needed to parametrize product states compared to separable states.

In what follows, we find closed form expressions of partial transpose distances for pure states and a class of classical states.
These examples show intriguing structure of correlations as measured by the partial transpose distance which is captured by a particular subadditivity relation.
Whenever possible, we will compare them with quantifiers obtained using relative entropy distance.

\subsection{Pure states}

Recall that for pure states, the closest separable state in relative entropy is obtained by performing a local dephasing on the Schmidt basis~\cite{Vedral_1998}.
Thus the closest separable state is also the closest classically-correlated state, and therefore the distance to separable states is equal to the distance to classically-correlated states.
In other words, the only form of quantum correlations in pure states is given by entanglement.
While the closest separable state in partial transpose distance is no longer unique, the statement that the only form of quantum correlation in pure states is given by entanglement remains also for the partial transpose quantifiers.

\begin{theorem}
	Let $\psi$ be a pure quantum state.
	Then
	\begin{align*}
		Q_{PPT} (\psi) = Q_{CC} (\psi) = N(\psi).
	\end{align*}\label{th:pure-cc}
\end{theorem}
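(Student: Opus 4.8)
The plan is to prove the two equalities by combining the inclusion of the uncorrelated sets with the binegativity result already established. First I would record the easy inequality coming from $\CC \subset \PPT$: since the infimum defining $Q_{CC}$ runs over a smaller set than the one defining $Q_{PPT}$, we immediately get $Q_{PPT}(\psi) \le Q_{CC}(\psi)$. Next I would invoke the fact, recalled in the discussion of the conjecture, that every pure state has positive binegativity; hence $\pt{\sigma_N}$ is positive semi-definite, the argument establishing Conjecture~\ref{conj:neg} applies to $\psi$, and $Q_{PPT}(\psi) = N(\psi)$. With these two observations the theorem reduces to the single upper bound $Q_{CC}(\psi) \le N(\psi)$.

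To obtain this bound I would exhibit an explicit classically-correlated state attaining the value $N(\psi)$. Writing $\psi$ in its Schmidt decomposition $\ket{\psi} = \sum_i \sqrt{\lambda_i}\,\ket{i_A i_B}$, I take $\sigma$ to be the state obtained by local dephasing in the Schmidt basis,
\begin{align*}
	\sigma &= \sum_i \lambda_i \proj{i_A i_B}.
\end{align*}
This operator is diagonal in the product basis $\{\ket{i_A}\ket{j_B}\}$, so its eigenvectors form a product basis and $\sigma \in \CC$.

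Finally I would compute $d_T(\psi, \sigma)$ directly. Since $\sigma$ is diagonal in a product basis it is invariant under partial transposition, $\pt{\sigma} = \sigma$, and it coincides exactly with the diagonal ($i = j$) part of $\pt{\proj{\psi}}$. The difference $\pt{\proj{\psi}} - \pt{\sigma}$ therefore retains only the off-diagonal terms $\sum_{i \neq j}\sqrt{\lambda_i \lambda_j}\,\ket{i_A j_B}\bra{j_A i_B}$, which split into independent $2 \times 2$ blocks, one for each unordered pair $\{i,j\}$, each with eigenvalues $\pm \sqrt{\lambda_i \lambda_j}$. Summing absolute values gives $\norm{\pt{\proj{\psi}} - \pt{\sigma}} = 2 \sum_{i < j}\sqrt{\lambda_i \lambda_j} = 2 N(\psi)$, so $d_T(\psi, \sigma) = N(\psi)$ and hence $Q_{CC}(\psi) \le N(\psi)$. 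Combined with $N(\psi) = Q_{PPT}(\psi) \le Q_{CC}(\psi)$ this squeezes all three quantities together. The only step requiring genuine computation is the spectral analysis of $\pt{\proj{\psi}}$, and the point to verify carefully is the clean cancellation of the diagonal part, which is what makes the dephased state optimal rather than merely feasible; the inclusion $\CC \subset \PPT$ and the binegativity input are then pure bookkeeping.
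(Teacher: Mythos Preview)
Your proposal is correct and follows essentially the same approach as the paper: invoke positive binegativity of pure states to get $N(\psi)=Q_{\PPT}(\psi)\le Q_{\CC}(\psi)$, then exhibit the Schmidt-dephased state $\sigma=\sum_i \lambda_i\proj{i_A i_B}$ and compute $d_T(\psi,\sigma)=N(\psi)$ by observing that $\pt{\psi}-\pt{\sigma}$ retains only the off-diagonal $2\times 2$ blocks with eigenvalues $\pm\sqrt{\lambda_i\lambda_j}$. The paper's proof is identical in structure and in the choice of $\sigma$.
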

\begin{proof}
	Recall that pure states have positive binegativity~\cite{Audenaert_2002} and therefore $N(\psi) = Q_{PPT} (\psi) \leq Q_{CC} (\psi)$.
	Thus, we only need to show that there is a classically correlated state $\sigma$ such that $d_T \left( \psi, \sigma \right) = N(\psi)$.
	To this end, we write $\psi$ in its Schmidt basis $\ket{\psi} = \sum_i \sqrt{p_i} \ket{ii}$, and define $\sigma = \sum_i p_i \proj{ii}$.
	Then
	\begin{align*}
		d_T \left( \psi, \sigma \right)
		&= \half \norm{\pt{\psi} - \pt{\sigma} }
		\\
		&= \half \norm{\sum_{ij} \sqrt{p_i p_j} \ket{ij} \bra{ji}  - \sum_i p_i \proj{ii} }
		\\
		&= \half \norm{\sum_{i \neq j} \sqrt{p_i p_j} \ket{ij} \bra{ji} }.
	\end{align*}
	Notice that the matrix $\sum_{i \neq j} \sqrt{p_i p_j} \ket{ij} \bra{ji}$ has eigenvalues $\pm \sqrt{p_i p_j}$ with eigenvectors $\ket{ij} \pm \ket{ji}$, for $i < j$.
	Thus $d_T \left( \psi, \sigma \right) = \sum_{i < j} \sqrt{p_i p_j} = N(\psi)$.
\end{proof}

This similarity to relative entropy distance should not be extrapolated as shown in the next section on example of classically-correlated states.

\subsection{Classically-correlated states}

We wish to compare to the relative entropy distance again.
It was shown that in relative entropy, the closest product state is always the product of marginals~\cite{Modi_2010} and hence the distance is given by the quantum mutual information.
However this is no longer true for trace distance~\cite{Aaronson_2013} and also does not hold for partial transpose distance as we now show using a class of classically-correlated states.

\begin{theorem}
	Let $\sigma = \sum_i p_i \proj{ii}$ be a classically-correlated state, with $p_i$ in non-increasing order.
	Let $m
		= \max{\left\{
				n\,|\, \sum_{i < n} \sqrt{p_i} \leq 1
		\right\}}
		$.
	Then
	\begin{align}
		Q_{Prod} (\sigma)
		&=
		1 - \sum_{i < m} p_i - {\left(1 - \sum_{i < m} \sqrt{p_i}\right)}^2
		\label{eq:cc-prod}
	\end{align}
	and the closest product state is given by:
	\begin{equation}
	\chi = {\left( \sum_{i < m} \sqrt{p_i} \proj{i} + \left(1 - \sum_{i < m} \sqrt{p_i} \right) \proj{m} \right)}^{\otimes 2}
	\end{equation}\label{th:cc-product}
\end{theorem}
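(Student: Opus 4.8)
The plan is to reduce the computation to an elementary optimization over two probability vectors and then solve it. First I would exploit that $\sigma$ is diagonal, so $\pt{\sigma} = \sigma$, and that partial transposition is a bijection on product states (if $\chi = \rho_A \otimes \rho_B$ then $\pt{\chi} = \rho_A \otimes \rho_B^{T}$, and $\rho_B \mapsto \rho_B^{T}$ is a bijection of the set of states). Hence
\begin{equation*}
Q_{\PROD}(\sigma) = \inf_{\chi \in \PROD} \half \norm{\pt{\sigma} - \pt{\chi}} = \inf_{\chi \in \PROD} \half \norm{\sigma - \chi}.
\end{equation*}
Next I would argue that the closest product state may be taken diagonal in the computational basis: applying the local dephasing channel $\Delta = \Delta_A \otimes \Delta_B$ sends any product state to a diagonal product state, fixes $\sigma$, and cannot increase the trace norm, so the infimum is unchanged. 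Writing a diagonal product state as $\sum_{ij} a_i b_j \proj{ij}$ with $a,b$ probability vectors, the difference is diagonal, and a direct computation using $\sum_{i \neq j} a_i b_j = 1 - \sum_i a_i b_i$ gives
\begin{equation*}
Q_{\PROD}(\sigma) = 1 - \max_{a,b} \sum_i \min(a_i b_i, p_i).
\end{equation*}

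The crux is therefore the optimization $\max_{a,b} \sum_i \min(a_i b_i, p_i)$ over probability vectors. I would first reduce to the symmetric case $a = b$: by the arithmetic--geometric mean inequality $a_i b_i \leq e_i^2$ with $e = \tfrac{1}{2}(a+b)$ again a probability vector, and since $\min(\cdot, p_i)$ is nondecreasing, $\sum_i \min(a_i b_i, p_i) \le \sum_i \min(e_i^2, p_i)$. Thus the maximum is attained at $a = b = x$, and since raising any $x_i$ above $\sqrt{p_i}$ wastes mass without increasing the objective, the value equals
\begin{equation*}
\max\left\{ \sum_i x_i^2 \;\middle|\; \sum_i x_i = 1,\; 0 \le x_i \le \sqrt{p_i} \right\}.
\end{equation*}

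Finally I would identify the maximizer by a majorization argument, which I expect to be the main obstacle. The candidate $x^\star$ saturates the largest caps first, namely $x^\star_i = \sqrt{p_i}$ for $i < m$ and $x^\star_m = r := 1 - \sum_{i<m}\sqrt{p_i}$, and is feasible precisely because the maximality of $m$ forces $0 \le r < \sqrt{p_m}$. Since the $\sqrt{p_i}$ are nonincreasing, $x^\star$ is already in decreasing order; for any feasible $x$, the sum of its $k$ largest entries is at most $\sum_{i<k}\sqrt{p_i}$ (each entry is bounded by its own cap, and the $k$ largest caps are the first $k$), which is exactly the corresponding partial sum of $x^\star$, so $x^\star$ majorizes every feasible $x$. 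As $\sum_i x_i^2$ is Schur-convex, $x^\star$ is optimal, so the maximum is $\sum_{i<m} p_i + r^2$, giving Eq.~\eqref{eq:cc-prod}. Translating $x^\star$ back through the reductions yields the stated closest product state $\chi = (\sum_{i<m}\sqrt{p_i}\proj{i} + r\proj{m})^{\otimes 2}$. The delicate points to get right are the feasibility bound $r < \sqrt{p_m}$ and the verification that the majorization inequalities hold at every level $k$, including $k > m$ where both partial sums equal $1$.
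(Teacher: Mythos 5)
Your proof is correct, and its first half follows the paper's own route almost verbatim: dephase to reduce to diagonal product states (the paper additionally relies on partial transposition acting trivially on diagonal matrices, where you instead note up front that $\chi \mapsto \pt{\chi}$ is a bijection of $\PROD$ --- both are fine), rewrite the trace norm as $1 - \sum_i \min(a_i b_i, p_i)$ (the paper's equivalent form is $\sum_i \max\{0,\, p_i - u_i^2\}$), and symmetrize to $a = b$ (your AM--GM step is the same as the paper's substitution $u_i = \tfrac{r_i+s_i}{2}$, $v_i = \tfrac{r_i-s_i}{2}$ followed by setting $v_i = 0$). Where you genuinely diverge is the crux optimization $\max \sum_i x_i^2$ subject to $\sum_i x_i = 1$, $0 \le x_i \le \sqrt{p_i}$: the paper resolves it by a local exchange argument (perturb any non-extremal coordinate by $\epsilon$, with some ad hoc bookkeeping when a coordinate would go negative), whereas you observe that the capped greedy vector $x^\star$ majorizes every feasible point --- the sum of any $k$ entries is bounded both by the $k$ largest caps and by $1$ --- and invoke Schur-convexity of $\sum_i x_i^2$. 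Your argument is cleaner and arguably more rigorous at exactly the spot where the paper's proof is most hand-wavy; the paper's exchange argument has the modest advantage of being self-contained, not appealing to majorization machinery. You correctly flag and verify the two delicate points (feasibility $0 \le r < \sqrt{p_m}$ from the maximality of $m$, and the majorization inequalities at all levels $k$, including $k \ge m$ where both partial sums equal $1$), so I see no gap.
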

\begin{proof}
	Proof is given in Appendix~\ref{app:cc-prod}.
\end{proof}

We now have the tools to prove an unexpected relation connecting the classical and quantum correlations in a pure state as measured by the partial transpose distance.

\subsection{Subadditivity of classical and quantum correlations in pure states}

We have shown that the only form of quantum correlations in pure states is given by entanglement.
If the distance in state space is chosen to be relative entropy, it turns out that entanglement and classical correlations sum up to exactly the total correlations in a pure state.
More formally, the relative entropy of entanglement $E_R(\psi) = \inf_{\sigma \in \textrm{SEP}} S(\psi \| \sigma)$ and classical correlations defined by the distance of the closest classically-correlated state $\sigma_0$ to the set of product states $C_R(\psi) = \inf_{\chi \in \PROD} S(\sigma_0 \| \chi)$ satisfy
\begin{align*}
	I(\psi) = E_R(\psi) + C_R(\psi),
\end{align*}
where the mutual information $I(\psi) = \inf_{\chi \in \PROD} S(\psi \| \chi)$ is the distance from the original state to the set of product states.
Since one of the closest classical states in partial transpose distance is the same as in relative entropy, we might hope that the same additivity relation holds for partial transpose distance.
A simple application of triangle inequality shows that $I_T(\psi) \le N(\psi) + C_T(\psi)$,
where we used analogical definitions as for the relative entropy, i.e. $I_T(\psi) = \inf_{\chi \in \PROD} d_T(\psi \| \chi)$ and $C_T(\psi) = \inf_{\chi \in \PROD} d_T(\sigma_0 \| \chi)$.
It turns out that this inequality is always strict and the following stronger inequality holds, where the classical correlations are multiplied by a factor of $\frac{1}{2}$.
\begin{theorem}\label{th:pure-decomposition}
	Let $\psi$ be a pure state.
	Then
	\begin{align}
		I_T(\psi)
		&\leq
		N(\psi) + \half C_T(\psi).
	\label{EQ_ADD_TRANS}
	\end{align}
	If $\psi$ is maximally entangled, then we have equality.
\end{theorem}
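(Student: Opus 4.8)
The plan is to prove the inequality by a single explicit construction and to treat the equality case by a separate lower bound. Write $\psi$ in its Schmidt basis, $\ket{\psi}=\sum_i\sqrt{p_i}\ket{ii}$. As in the proof of Theorem~\ref{th:pure-cc} the closest classically-correlated state is $\sigma_0=\sum_i p_i\proj{ii}$, and by Theorem~\ref{th:cc-product} its closest product state is $\chi=\mu\otimes\mu$ with $\mu=\sum_{i<m}\sqrt{p_i}\proj{i}+q_m\proj{m}$, where $q_m=1-\sum_{i<m}\sqrt{p_i}$. I would take this $\chi$ as a trial state, so that $I_T(\psi)\le d_T(\psi,\chi)$, and show that in fact $d_T(\psi,\chi)=N(\psi)+\half C_T(\psi)$ exactly; the stated inequality then follows immediately.

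The computation is the core step. Setting $q_i=\sqrt{p_i}$ for $i<m$ and $q_i=0$ for $i>m$, the state $\chi=\sum_{ij}q_iq_j\proj{ij}$ is diagonal, so $\pt{\chi}=\chi$, whereas $\pt{\psi}=\sum_{ij}\sqrt{p_ip_j}\ket{ij}\bra{ji}$. The difference $\pt{\psi}-\pt{\chi}$ is block diagonal: one-dimensional blocks $p_i-q_i^2$ supported on $\ket{ii}$, and two-dimensional blocks on each $\{\ket{ij},\ket{ji}\}$ with $i<j$ whose eigenvalues are $-q_iq_j\pm\sqrt{p_ip_j}$. I would sum the resulting singular values, using the defining property of $m$ — namely $\sum_{i<m}\sqrt{p_i}\le 1<\sum_{i\le m}\sqrt{p_i}$, which forces $\sqrt{p_m}\ge q_m$ and hence $p_m\ge q_m^2$ — to resolve every absolute value with a definite sign. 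Collecting terms with the identities $\sum_{i<j}\sqrt{p_ip_j}=N(\psi)$ and $\sum_{i<j}q_iq_j=\half(1-\sum_i q_i^2)=\half C_T(\psi)$, and using $\sum_i p_i=1$, I expect the remaining contributions to cancel exactly, leaving $d_T(\psi,\chi)=N(\psi)+\half C_T(\psi)$.

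For the equality statement I must add the matching lower bound $I_T(\psi)\ge N(\psi)+\half C_T(\psi)$ when $\psi$ is maximally entangled. Here $\pt{\psi}=\frac1d V$ with $V$ the swap operator, and since partial transposition maps product states to product states, $I_T(\psi)=\inf_{\rho,\tau}\half\norm{\tfrac1d V-\rho\otimes\tau}$. I would first reduce to diagonal factors: a $U\otimes U$ rotation (which fixes $V$) diagonalizes $\rho$, after which averaging $\tau$ over diagonal phase unitaries — permissible because the trace norm is convex and every term in the average is equal by the $U\otimes U$ invariance of $V$ — dephases $\tau$ without increasing the distance. The problem then reduces to minimizing the explicit singular-value sum over diagonal $\rho,\tau$, and I would verify that the minimum equals $N(\psi)+\half C_T(\psi)$, attained at $\rho=\tau=\mu$.

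The main obstacle is precisely this lower bound. Because the set of product states is non-convex, any linear (semidefinite) dual certificate only bounds the distance to its convex hull, the separable states, which is strictly smaller than $I_T(\psi)$ — for two qubits the separable distance is $\half$ while the claimed value is $1/\sqrt2$. No linear witness can therefore close the gap, so the non-convex reduction to diagonal products followed by the explicit minimization seems unavoidable; making that minimization rigorous for all $d$, rather than merely checking the candidate optimizer, is where the real work lies.
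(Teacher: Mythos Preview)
Your plan for the inequality is correct and essentially identical to the paper's: both pick the product state $\chi$ from Theorem~\ref{th:cc-product}, decompose $\pt{\psi}-\pt{\chi}$ into the one-dimensional blocks on $\ket{ii}$ and the two-dimensional blocks on $\{\ket{ij},\ket{ji}\}$, resolve every sign using $\sqrt{p_m}\ge q_m$, and obtain $d_T(\psi,\chi)=N(\psi)+\half C_T(\psi)$ exactly.

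For the equality case your reduction to diagonal factors is also correct and close to the paper's: you exploit the $U\otimes U$ invariance of the swap together with phase-averaging, while the paper uses the $U\otimes U^*$ invariance of $\Phi$ followed by a pinching onto the same block structure. Both arguments land on the identical reduced problem of minimizing $\half\norm{\tfrac1d V-\rho\otimes\tau}$ over diagonal $\rho,\tau$.

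The genuine gap is precisely the step you flag as ``where the real work lies'': the lower bound over diagonal products. The paper does \emph{not} attack this by carrying out the constrained minimization (which is awkward because the objective is a sum of $\max$'s, piecewise in the parameters). Instead it proves a stand-alone inequality (Lemma~\ref{lemma:schmidt}): for \emph{every} pure $\psi$ and \emph{every} product state $\chi_S$ diagonal in the Schmidt basis,
\[
d_T(\psi,\chi_S)\;\ge\;N(\psi)+\half C_T(\psi).
\]
The proof is a direct block computation: lower-bound the $(i,j)$-block norm by $2\max\{\tfrac{r_is_j+r_js_i}{2},\sqrt{p_ip_j}\}$, rewrite via $2\max\{a,b\}=|a+b|+|a-b|$, and regroup to recognise $\norm{\pt{\psi}-\pt{\sigma_0}}$ and $\half\norm{\pt{\sigma_0}-\pt{\chi_S}}$ plus a remainder $\sum_{ij}\max\{0,c_{ij}-\sqrt{p_ip_j}\}\ge 0$. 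The inequality thus holds pointwise in $\chi_S$, so no analysis of optimizers is needed; this completely sidesteps the non-convexity obstacle you raised. Combining this lemma with your reduction to diagonal products closes the equality for the maximally entangled case.
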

\begin{proof}
	Proof is given in Appendix~\ref{app:pure-decomposition}.
\end{proof}

Notice that the diameter (the largest distance between two states in a set) of the set of PPT states is bounded by $1$ for any dimension $\dim(\hs_A \otimes \hs_B)$,
whereas the diameter of $\dm$ grows linearly as $\min\{\dim(\hs_A), \dim(\hs_B)\}$.
Thus, as the dimension grows, the classical correlations $C_T(\psi)$ contribute less and less to the total correlations of a pure state $I_T(\psi)$.

We also performed numerical checks ($10^3$ uniformly random pure states in dimensions $d=2$ to $d=6$) which suggest that the equality in Eq.~\eqref{EQ_ADD_TRANS} is achieved for all pure states.

\section{Applications}
We show two applications of the developed framework: to detect whether a dynamics is decomposable and to detect non-Markovianity.

\subsection{Detection of non-decomposability}
Consider a tripartite system $\rho_{ABC}$ that evolves in time according to some global map $\Lambda_{ABC}$.
A decomposition of the tripartite map $\Lambda_{ABC}$ are maps $\Lambda_{AC}, \Lambda_{BC}$ such that $\Lambda_{ABC} = \Lambda_{AC} \Lambda_{BC}$.
If a map has a decomposition, then we say that the map is decomposable.
Intuitively, decomposable maps are those that can be implemented by having a mediator particle $C$ and the following scheme: $A$ interacts with $C$, sends the mediator $C$ to $B$, and then $B$ interacts with $C$.

Ref.~\cite{Krisnanda_2018} shows the gain in correlations between $A$ and $B$ under decomposable dynamics is bounded above.
Thus a gain above this bound detects the non-decomposability.
For correlation quantifiers within the distance approach, the bound holds for any initial state.
However for a general correlation quantifier, only product initial states were allowed.
Hence, only such states could be considered with negativity up to now.
The approach developed here allows us to derive a bound for negativity that holds for an arbitrary initial state.

In order to show the usefulness of this extension, let us give an example where the non-decomposability is detected with negativity and the dynamics starts in a non-product tripartite state.
Consider two fields coupled by a two-level atom and described by the Jaynes-Cummings Hamiltonian:
\begin{equation}
H = g(\hat a \sigma_+ + \hat a^\dagger \sigma_-) + g(\hat b \sigma_+ + \hat b^\dagger \sigma_-),
\end{equation}
where $\hat a$ and $\hat b$ are the annihilation operators for the fields $A$ and $B$, respectively, and $\sigma_{\pm}$ are the raising and lowering operators for the atom, which we will denote as object $C$.
Theorem 1 in Ref.~\cite{Krisnanda_2018} applied to the partial transpose distance shows the following bound which holds for decomposable dynamics independently of the initial state:
\begin{equation}
	N_{A:B} (t) \le I_T^{AC:B}(0) + \sup_{\ket{\psi_{AC}}} N_{A:C} (\psi_{AC}),
\end{equation}
where $I_T^{AC:B}(0) = \inf_{\sigma_{AC} \otimes \sigma_B} d_{T_B}(\rho_0, \sigma_{AC} \otimes \sigma_B)$ is the partial transpose distance from the initial state $\rho_0$
to the set of product states in partition $AC:B$ and the supremum is over all pure states in $A:C$ bipartition.
One verifies that the latter is given by $\half \left(d_C - 1 \right)$, which in our case is $\half$.
Assuming that the initial state is pure, using \eqref{EQ_ADD_TRANS} and $C_T(\psi) \leq 1$, we find that any decomposable dynamics satisfies:
\begin{equation}
	N_{A:B} (t) \le N_{AC:B}(0) + 1.
	\label{EQ_NN1}
\end{equation}
For an illustration of negativity witnessing non-decomposability with a non-product initial state, consider the NOON state on $AB$ and ground state of $C$, $|\psi_{ABC} \rangle = \frac{1}{\sqrt{2}}(|400 \rangle + |040 \rangle)$.
In this case $N_{AC:B}(0) = \frac{1}{2}$, which transforms Eq.~\eqref{EQ_NN1} to $N_{A:B} (t) \leq \frac{3}{2}$.
Fig.~\ref{FIG_NONDIV} shows that this bound is indeed violated, witnessing the non-decomposability.
Note that in this method, the interaction Hamiltonians remain unknown, i.e.\ the bound does not depend on them.
\begin{figure}[!b]
\center
\includegraphics[width=0.8\textwidth]{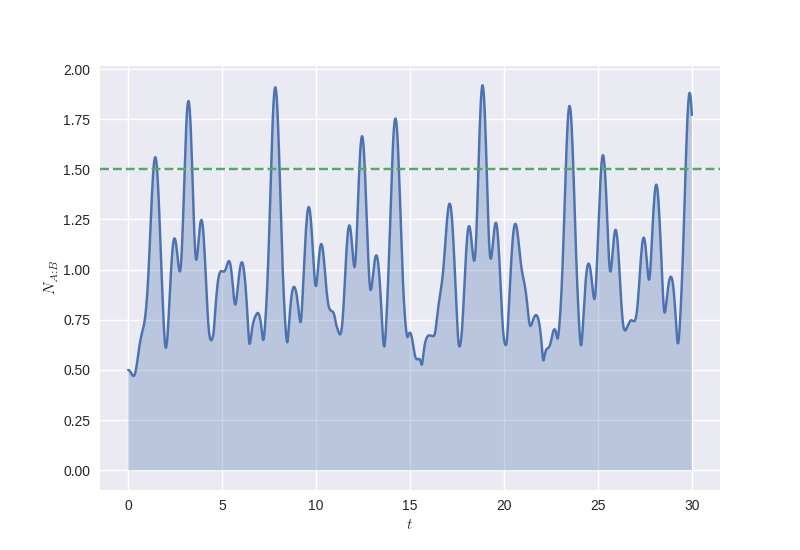}
\caption{\label{FIG_NONDIV}
Negativity of two fields coupled via two-level atom and evolving under Jaynes-Cummings model.
The dashed line is the bound on negativity in any divisible dynamics derived using the fact that negativity is the lower bound on the partial transpose distance to the set of PPT states.
The bound is useful because it is violated, revealing non-decomposability of the underlying dynamics.}
\end{figure}

\subsection{Detection of non-Markovianity}
The problem of witnessing non-Markovian dynamics has been a fruitful area of study~\cite{Rivas_2014}.
We stick here to the definition of Markovian dynamics in terms of dynamical maps.
A dynamical map is a continuous, time-parametrized collection of CPTP maps $\{ \Lambda_t \}$ that describes the evolution of a system.
We require that $\Lambda_0 = \identity$.
Such a dynamical map is called divisible if for any time sequence $0 < s < t$, there exists a ``propagator'' $V_{t,s}$ such that $\Lambda_t = V_{t, s} \Lambda_s$.
A dynamical map is called CP-divisible if it is divisible and the propagator is a CPTP map.
Such dynamics is shortly called Markovian.

A well-known witness of non-Markovianity is the increase of the trace distance between two states in the evolution~\cite{Breuer_2009}.
Ref.~\cite{Bylicka_2017} showed this witness is in fact universal, i.e.\ if a given dynamics is non-Markovian, we can always find two states whose trace distance increases (after adding ancillas).
The same methods applied to the partial transpose distance show that the dynamical map $\{ \Lambda_t \}$ is Markovian (CP-divisible) if and only if:
\begin{align}\label{EQ_MARK}
	\frac{d}{dt} d_{T_C} \left(
		(\Lambda_t \otimes \identity_{BC}) \rho,
		(\Lambda_t \otimes \identity_{BC}) \sigma
	\right) \leq 0 \quad \textrm{for all} \quad \rho, \sigma \in \mathcal{D} (\hs_A \otimes \hs_B \otimes \hs_C).
\end{align}
With the developed tools we can show that negativity change is a witness of non-Markovianity.
Let us denote the evolved states $\rho_t = (\Lambda_t \otimes \identity_{BC}) \rho$ and $\sigma_t = (\Lambda_t \otimes \identity_{BC}) \sigma$, and choose $\sigma$ as the closest PPT state to $\rho$ in AB:C partition (for the partial transpose distance). Then also $\sigma_t$ is PPT and we have the following chain of inequalities for CP-divisible dynamics:
\begin{equation}
N_{AB:C}(\rho_t) \le d_{T_C}(\rho_t,\sigma_t) \le d_{T_C}(\rho,\sigma) = N_{AB:C}(\rho).
\end{equation}
Accordingly, if the negativity increases one concludes that the map was not Markovian.
Ref.~\cite{Kolodynski_2020} showed that negativity is a universal witness of non-Markovianity, i.e.\ for any non-Markovian map there exist states for which the negativity increases.
Furthermore, since this proof relies only on the monotonicity of the partial transpose distance, similar witness can be formulated for e.g.\ total correlations $I_T$ introduced here.

\section{Conclusions}

Quantum mechanical correlations come in many flavours and ideally we would like to have a framework in which they all could be efficiently computed and meaningfully compared.
In this spirit, we introduced partial transpose distance and explored its connection to negativity --- a computable entanglement monotone.
For a broad class of states we proved that negativity is equal to the partial transpose distance to the set of PPT states and we conjecture this relation holds in general.
We then defined other types of correlations on equal footing to negativity (measured by the same distance) and derived their closed forms for selected classes of states.
These findings allowed us to show a subadditivity relation where total correlations according to the partial transpose distance are less than quantum correlations summed with one half of classical correlations.
Finally, we applied developed framework to witnessing divisibility properties of quantum dynamics.
The paper contains a number of open problems that we hope will stimulate further development of comparable and computable correlation quantifiers.

\section*{Acknowledgements}

RG thanks Mark Wilde for discussions.
This work is supported by Foundation for Polish Science (FNP), IRAP project ICTQT, contract no. 2018/MAB/5, co-financed by EU Smart Growth Operational Programme.
RG acknowledges support from the UGrants-start awarded by the University of Gda\'nsk.
MM acknowledges financial support by the ``Quantum Coherence and
Entanglement for Quantum Technology'' project, carried out within the
First Team programme of the Foundation for Polish Science co-financed by
the European Union under the European Regional Development Fund.
TP is supported by the Polish National Agency for Academic Exchange NAWA Project No. PPN/PPO/2018/1/00007/U/00001.

\appendix

\section{Robustness of total correlation}\label{app:robustness}
Given a state $\rho$, we define (entanglement) robustness of $\rho$ relative to $\sigma$ as~\cite{Vidal_1999a}
\begin{align*}
	R_{\mathrm{SEP}}(\rho \| \sigma)
	&=
	\inf \left\{
		s
		\,|\, s \geq 0, \frac{1}{1+s} \left( \rho + s \sigma \right) \in \SEP
	\right\},
\end{align*}
where $\SEP$ denotes the set of separable states.
Vidal~\cite{Vidal_1999a} defined the absolute robustness of the state $\rho$ as $R_{\SEP} (\rho) = \inf_{\sigma \in \SEP} R_{\mathrm{SEP}}(\rho \| \sigma)$, and it was shown that this quantity fulfills the axioms of an entanglement measure.

By swapping the set of separable states $\SEP$ with product states $\PROD$ in the definition of $R_{\mathrm{SEP}}(\rho \| \sigma)$ and $R_{\SEP} (\rho)$, we get a candidate quantifier of total correlation
\begin{align}
	R_{\mathrm{Prod}}(\rho \| \sigma)
	&=
	\inf \left\{
		s
		\,|\, s \geq 0, \frac{1}{1+s} \left( \rho + s \sigma \right) \in \PROD
	\right\},
	\nonumber \\
	R_{\PROD} (\rho)
	&=
	\inf_{\sigma \in \PROD} R_{\mathrm{Prod}}(\rho \| \sigma).
\end{align}
However, this measure cannot distinguish between entangled states, because $R_{\PROD} (\psi) = \infty$ for any entangled pure state $\psi$.
This is a consequence of the following lemma.
\begin{lemma}
	Let $\psi$ be an entangled pure state.
	Then for any $\chi_A, \chi_B \geq 0$, $p > 0$,
	\begin{align*}
		\rho
		&=
		p \proj{\psi} + (1-p) \chi_A \otimes \chi_B
	\end{align*}
	is not a product state.
\end{lemma}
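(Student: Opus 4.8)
The plan is to argue by contradiction: assume $\rho = \tau_A \otimes \tau_B$ for some $\tau_A, \tau_B \geq 0$ and extract an inconsistency from the entanglement of $\psi$. The key tool is that a product operator is completely rigid under ``conditioning'' on the $A$ system. Concretely, for any $\ket{u} \in \hs_A$ I would define the operator on $\hs_B$
\[
  M_u = \left(\bra{u} \otimes \identity\right) \rho \left(\ket{u} \otimes \identity\right),
\]
so that if $\rho = \tau_A \otimes \tau_B$ then $M_u = \langle u | \tau_A | u\rangle\, \tau_B$ is always a nonnegative multiple of the \emph{single} operator $\tau_B$, independent of $\ket{u}$. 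The strategy is then to exhibit two choices of $\ket{u}$ for which the $M_u$ computed directly from the mixture cannot be mutually proportional.

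First I would fix a Schmidt decomposition $\ket{\psi} = \sum_i \sqrt{\lambda_i}\ket{a_i}\ket{b_i}$; since $\psi$ is entangled there are at least two nonzero coefficients, say $\lambda_1, \lambda_2 > 0$ with $\ket{b_1} \perp \ket{b_2}$. Evaluating $M_u$ on the mixture gives $M_u = p\,\ket{f(u)}\bra{f(u)} + (1-p)\langle u|\chi_A|u\rangle\,\chi_B$, where $\ket{f(u)} = \sum_i \sqrt{\lambda_i}\langle u|a_i\rangle\ket{b_i}$. Choosing $\ket{u} = \ket{a_1}$ and then $\ket{u} = \ket{a_2}$ collapses the first term onto a single Schmidt direction, yielding
\[
  M_{a_1} = p\lambda_1 \proj{b_1} + c_1 \chi_B, \qquad M_{a_2} = p\lambda_2 \proj{b_2} + c_2 \chi_B,
\]
with $c_i = (1-p)\langle a_i|\chi_A|a_i\rangle \geq 0$. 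Both operators are nonzero (each is strictly positive on $\ket{b_i}$), so the product assumption forces $M_{a_1} = \mu\, M_{a_2}$ for some scalar $\mu > 0$.

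The contradiction then comes from signs. Rearranging $M_{a_1} = \mu\, M_{a_2}$ gives
\[
  p\lambda_1 \proj{b_1} - \mu p \lambda_2 \proj{b_2} = (\mu c_2 - c_1)\,\chi_B.
\]
The left-hand side has eigenvalue $+p\lambda_1 > 0$ on $\ket{b_1}$ and $-\mu p\lambda_2 < 0$ on $\ket{b_2}$, hence is indefinite, whereas the right-hand side is a scalar multiple of $\chi_B \geq 0$ and is therefore semidefinite --- which is impossible. The hard part will be the bookkeeping of proportionality constants rather than the main idea: one must check that $M_{a_1}$ and $M_{a_2}$ are genuinely nonzero so that $\mu>0$, and that the orthogonality $\ket{b_1}\perp\ket{b_2}$ --- which holds \emph{precisely because} $\psi$ is entangled --- makes the left-hand side indefinite rather than accidentally semidefinite.
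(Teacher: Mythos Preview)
Your argument is correct. The conditioning operators $M_u$ are well-defined, the two choices $u=a_1,a_2$ yield nonzero operators (since $\langle b_i|M_{a_i}|b_i\rangle \ge p\lambda_i>0$), and the product hypothesis forces $M_{a_1}=\mu M_{a_2}$ with $\mu>0$. The indefinite-versus-semidefinite contradiction is clean and uses exactly the orthogonality $\ket{b_1}\perp\ket{b_2}$ coming from the Schmidt decomposition. Your argument also handles $p=1$ uniformly (then $c_1=c_2=0$ and the right-hand side vanishes, while the left-hand side is still indefinite).

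The paper takes a different, more coordinate-based route: it writes $\rho_A=p\alpha+(1-p)\chi_A$, $\rho_B=p\beta+(1-p)\chi_B$ with $\alpha,\beta$ the marginals of $\psi$, then computes two specific matrix elements in the Schmidt basis, $\bra{01}\rho\ket{10}$ and $\bra{00}\rho\ket{11}$, once from the product form and once from the mixture. The first forces $\langle 0|\chi_A|1\rangle\,\langle 1|\chi_B|0\rangle=0$, and feeding this into the second yields $p\,r_0 r_1=0$, a contradiction. This is a two-step matrix-element chase with an intermediate case split, whereas your approach is a single operator identity whose sign structure gives the contradiction in one shot. Your proof is more conceptual and avoids the paper's slightly overstated intermediate claim that one of $\chi_A,\chi_B$ must be ``diagonal'' (only the relevant off-diagonal entry is actually shown to vanish); the paper's approach, on the other hand, is a touch more elementary in that it never invokes positivity or definiteness of operators, only equality of scalars.
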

\begin{proof}
Clearly if $p = 1$, $\rho$ cannot be product because $\rho = \psi$ is entangled.
So without loss of generality, we assume that $0 < p < 1$.

We prove by contradiction.
Assume $\rho$ can be written as a product $\rho = \rho_A \otimes \rho_B$.
Then $\rho_A = p \alpha + (1-p) \chi_A$ and $\rho_B = p \beta + (1-p) \chi_B$, where $\alpha$ and $\beta$ are the marginals of $\psi$.
Since $\psi$ is entangled, it has at least two terms in the Schmidt decomposition $\ket{\psi} = r_0 \ket{00} + r_1 \ket{11} + \dots$.
Note that in this basis, the reduced states $\alpha$ and $\beta$ are fully diagonal.
We compute the matrix element $\langle 01 | \rho |10 \rangle$ in two ways: first using $\rho = \rho_A \otimes \rho_B$ and second using the expansion in terms of the entangled state.
One finds
\begin{equation}
	{(1-p)}^2 \bra{0} \chi_A \ket{1} \bra{1} \chi_B \ket{0}
	=
	\bra{01} \rho \ket{10}
	=
	(1-p) \bra{0} \chi_A \ket{1} \bra{1} \chi_B \ket{0}
\end{equation}
If both local states $\chi_A$ and $\chi_B$ admit off-diagonal elements, this implies either $p = 0$ or $p = 1$, which contradicts our assumptions.
Therefore, at least one of them has to be diagonal in the Schmidt basis.
This, however, contradicts matrix element $\langle 00 | \rho |11 \rangle$ which in such a case implies that $p \, r_0 r_1 = 0$.
Therefore for any $p > 0$, $\rho$ is not a product.
\end{proof}

\section{Alternative form of conjecture}\label{app:conjecture-alt}
We will show that the conjecture holds if and only if for every $\rho$, one can find a PPT state $\sigma$ such that $\pospart{\pt{\rho}} \geq \sigma$.
It is easy to show if there is a PPT state $\sigma$ such that $\pospart{\pt{\rho}} \geq \sigma$, then $d_T(\rho, \pt{\sigma}) = N(\rho)$.
Therefore, we only have to show the conjecture implies there exists $\sigma \in \PPT$ such that $\pospart{\pt{\rho}} \geq \sigma$.

Suppose the conjecture holds, i.e.\ there exists $\sigma \in \PPT$ such that $d_T (\rho, \sigma) = N(\rho)$.
Let us choose the optimal $\sigma$ and show that this implies $\pospart{\pt{\rho}} \geq \pt{\sigma}$.
We have
\begin{align}
	\norm{\pt{\rho} - \pt{\sigma}}
	=
	2 d_T (\rho, \sigma)
	= 2 N(\rho)
	= \norm{\pt{\rho}} - \norm{\pt{\sigma}}.
\end{align}

Let us write $\pt{\sigma}$ in the blocks corresponding to $\pospart{\pt{\rho}}$ and $\negpart{\pt{\rho}}$, that is $\pt{\sigma} = \begin{pmatrix} X & Z \\ Z^\dagger & Y \end{pmatrix}$, with $X, Y \geq 0$.
By projecting onto the diagonal blocks, we have
\begin{align}
	\norm{\pt{\rho} - \pt{\sigma}}
	&=
	\norm{
		\begin{pmatrix}
			\pospart{\pt{\rho}} & 0 \\
			0 & \negpart{\pt{\rho}}
		\end{pmatrix}
		- \begin{pmatrix}
			X & Z \\
			Z^\dagger & Y
		\end{pmatrix}
	}
	\nonumber \\
	&\geq
	\norm{
		\begin{pmatrix}
			\pospart{\pt{\rho}} & 0 \\
			0 & \negpart{\pt{\rho}}
		\end{pmatrix}
		- \begin{pmatrix}
			X & 0 \\
			0 & Y
		\end{pmatrix}
	}
	\nonumber \\
	&=
	\norm{\pospart{\pt{\rho}} - X}
	+ \norm{\negpart{\pt{\rho}} - Y},
\end{align}
where we used the contractivity of trace norm in the inequality.

We have reduced the problem to showing that $\norm{\pospart{\pt{\rho}} - X} + \norm{\negpart{\pt{\rho}} - Y} = \norm{\pt{\rho}} - \norm{\pt{\sigma}}$ implies $\pospart{\pt{\rho}} \geq \pt{\sigma}$.
Using the inequality $\norm{A} \geq \Tr{A}$ for the first term and triangle inequality for the second, we have
\begin{align*}
	\norm{\pospart{\pt{\rho}} - X}
	+ \norm{\negpart{\pt{\rho}} - Y}
	&\geq
	\Tr{\left( \pospart{\pt{\rho}} - X \right)}
	+ \norm{\negpart{\pt{\rho}}}
	- \norm{Y}
	\\
	&=
	\norm{\pt{\rho}}
	- \norm{\pt{\sigma}}
\end{align*}
Because we chose $\sigma$ such that $\norm{\pt{\rho} - \pt{\sigma}} = \norm{\pt{\rho}} - \norm{\pt{\sigma}}$, the inequalities applied to the first and second term must be saturated.
Therefore, the following conditions must be satisfied:
\begin{enumerate}
	\item\label{it:cond-1} $\norm{\pospart{\pt{\rho}} - X} = \Tr{\pospart{\pt{\rho}} - X}$, and
	\item\label{it:cond-2} $\norm{\negpart{\pt{\rho}} - Y} = \norm{\negpart{\pt{\rho}}} - \norm{Y}$.
\end{enumerate}
Since $\negpart{\pt{\rho}} - Y \leq 0$, condition~\ref{it:cond-2} implies
\begin{align}
	-\Tr{\left(\negpart{\pt{\rho}} - Y\right)} = \norm{\negpart{\pt{\rho}} - Y} = \norm{\negpart{\pt{\rho}}} - \norm{Y} = - \Tr{\negpart{\pt{\rho}}} - \Tr{Y},
\end{align}
i.e.\ $\Tr{Y} = 0$.
However since $Y \geq 0$, this can only be true if $Y = 0$.
Since $Y = 0$ and $\pt{\sigma} = \begin{pmatrix} X & Z \\ Z^\dagger & 0 \end{pmatrix} \geq 0$, we must have $Z = 0$ and $\pt{\sigma} = X$.
Therefore condition~\ref{it:cond-1} becomes $\norm{\pospart{\pt{\rho}} - \pt{\sigma}} = \Tr{\pospart{\pt{\rho}} - \pt{\sigma}}$, which is true if and only if $\pospart{\pt{\rho}} \geq \pt{\sigma}$.

\section{Proof of Theorem~\ref{th:cc-product}}\label{app:cc-prod}
\begin{theorem*}
	Let $\sigma = \sum_i p_i \proj{ii}$ be a classically-correlated state, with $p_i$ in non-increasing order.
	Let $m
		= \max{\left\{
				n\,|\, \sum_{i < n} \sqrt{p_i} \leq 1
		\right\}}
		$.
	Then
	\begin{align*}
		Q_{Prod} (\sigma)
		&=
		1 - \sum_{i < m} p_i - {\left(1 - \sum_{i < m} \sqrt{p_i}\right)}^2
	\end{align*}
	and the closest product state is given by:
	\begin{align*}
	\chi = {\left( \sum_{i < m} \sqrt{p_i} \proj{i} + \left(1 - \sum_{i < m} \sqrt{p_i} \right) \proj{m} \right)}^{\otimes 2}
	\end{align*}
\end{theorem*}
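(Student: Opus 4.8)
The plan is to collapse the infimum over all product states to a finite classical optimization and then solve it by water-filling. First I would exploit that $\sigma$ is diagonal in the computational basis, so $\pt{\sigma} = \sigma$ and hence $d_T(\sigma,\chi) = \half\norm{\sigma - \pt{\chi}}$. Since $\pt{\chi} = \chi_A \otimes \chi_B^{T}$ and transposition is a bijection of the state space, the operator $\pt{\chi}$ ranges over \emph{all} product states as $\chi$ does; therefore $Q_{\PROD}(\sigma)$ equals the ordinary trace distance from $\sigma$ to the set of product states. Next I would argue that the closest product state may be taken diagonal: $\sigma$ is fixed by every local phase unitary $D_A \otimes D_B$, so twirling an arbitrary $\chi_A\otimes\chi_B$ over these unitaries yields the diagonal product state $\mathrm{diag}(\chi_A)\otimes\mathrm{diag}(\chi_B)$, and by unitary invariance together with joint convexity of the trace norm this twirl never increases the distance. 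Thus it suffices to minimise over $\chi_A = \sum_i q_i\proj{i}$, $\chi_B = \sum_j r_j \proj{j}$ with $q,r$ probability vectors.

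With both states diagonal in the product basis, the trace norm is the sum of the absolute values of the diagonal entries, giving
\[
2\,d_T(\sigma,\chi) = \sum_i \left| p_i - q_i r_i \right| + \sum_{i\neq j} q_i r_j = \sum_i |p_i - t_i| + 1 - \sum_i t_i ,
\]
where $t_i = q_i r_i$ and I used $\sum_{i\neq j}q_i r_j = 1 - \sum_i t_i$. The key observation is the pointwise identity $|p_i - t_i| - t_i = p_i - 2\min(p_i,t_i)$, valid whether or not $t_i$ exceeds $p_i$, which collapses the objective (after halving) to $1 - \sum_i \min(p_i,t_i)$. Writing $s_i = \min(p_i,t_i)$ one has $0 \le s_i \le p_i$, and since $\sqrt{s_i}\le\sqrt{q_i}\sqrt{r_i}$ with $\sum_i\sqrt{q_i}\sqrt{r_i}\le 1$ (by AM--GM and normalisation), also $\sum_i\sqrt{s_i}\le 1$. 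Hence $d_T(\sigma,\chi) \ge 1 - M$, with
\[
M = \max\!\left\{\, \textstyle\sum_i s_i \;:\; 0\le s_i\le p_i,\ \sum_i\sqrt{s_i}\le 1 \,\right\}.
\]

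It then remains to evaluate $M$ and verify tightness. Setting $u_i = \sqrt{s_i}$, the task is to maximise the convex function $\sum_i u_i^2$ over the polytope $\{0\le u_i\le\sqrt{p_i},\ \sum_i u_i\le 1\}$; an exchange argument — shifting budget toward smaller indices never decreases $\sum_i u_i^2$, because the caps $\sqrt{p_i}$ are non-increasing — shows the optimum is the top-filling vector $u_i = \sqrt{p_i}$ for $i<m$, $u_m = 1-\sum_{i<m}\sqrt{p_i}$, and $u_i = 0$ otherwise, so $M = \sum_{i<m}p_i + (1-\sum_{i<m}\sqrt{p_i})^2$, reproducing the claimed value $1-M$ in Eq.~\eqref{eq:cc-prod}. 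Since this optimiser satisfies $\sum_i u_i = 1$, the assignment $q_i = r_i = u_i$ is a genuine pair of distributions realising $t_i = s_i$, so $\chi = (\sum_i u_i\proj{i})^{\otimes 2}$ — exactly the stated product state — attains $d_T(\sigma,\chi)=1-M$ and meets the lower bound.

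I expect the main obstacle to lie in the two reduction steps rather than in the closing calculation: one must check carefully that passing to $\pt{\chi}$ preserves product structure and that the phase-twirl leaves $\sigma$ invariant while sending every product state to a diagonal product state, so that the genuinely non-convex minimisation over $\PROD$ may legitimately be replaced by the tractable classical water-filling problem. Once that reduction is secured, the identity $|p_i - t_i| - t_i = p_i - 2\min(p_i,t_i)$ and the standard exchange argument complete the proof cleanly.
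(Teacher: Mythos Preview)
Your proof is correct and shares the same skeleton as the paper's: reduce to diagonal product states by a dephasing argument, compute the distance explicitly as $\sum_i|p_i - t_i| + 1 - \sum_i t_i$ with $t_i = q_i r_i$, and finish with an exchange argument to identify the top-filling optimiser. The one substantive difference lies in how you pass from the two-marginal problem to a single-variable one. The paper substitutes $u_i = (q_i+r_i)/2$, $v_i = (q_i-r_i)/2$, so that $t_i = u_i^2 - v_i^2$, and observes directly that $v_i = 0$ (equal marginals) is optimal term by term, landing on the maximisation of $\sum_i u_i^2$ under the equality $\sum_i u_i = 1$. You instead relax: from $s_i = \min(p_i,t_i)$ and AM--GM you extract the weaker constraint $\sum_i\sqrt{s_i}\le 1$, maximise $\sum_i s_i$ over this larger region, and then verify that the relaxed optimum is achieved by an honest product state with $q=r$. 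Both routes are sound; the paper's change of variables is marginally more direct, while yours separates the lower bound from attainability cleanly and sidesteps the auxiliary variable $v$.
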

\begin{proof}
	Note that the state $\sigma$ is diagonal in the basis $\ket{ij}$, so it is invariant under the projection $A \mapsto \sum_{ij} \proj{ij} A \proj{ij}$.
	Since the projection takes product states to product states, a standard argument using contractivity of the distance shows that the closest product state to $\sigma$ must also be diagonal in the basis $\ket{ij}$:
	\begin{align*}
		\chi &= \left(
			\sum_i r_i \proj{i}
			\right) \otimes \left(
			\sum_j s_j \proj{j}
		\right),
	\end{align*}
	with $r_i, s_j \geq 0$, and $\sum_i r_i = \sum_j s_j = 1$.

	We will optimize over $\chi$ and show that the minimum is given by Eq.~\eqref{eq:cc-prod}.
	We have
	\begin{align*}
		\norm{\sigma - \chi}
		&= \norm{\sum_i p_i \proj{ii}
			- \sum_{ij} r_i s_j \proj{ij}
		}
		\\
		&= \sum_{ij} |p_i \delta_{ij} - r_i s_j|.
	\end{align*}
	Separating out the $i = j$ terms from $i \neq j$, we have
	\begin{align*}
		\norm{\sigma - \chi}
		&=
		\sum_i |p_i - r_i s_i|
		+ \sum_{ij} \left(1 - \delta_{ij} \right) r_i s_j
		\\
		&=
		\sum_i |p_i - r_i s_i|
		+ \sum_{i} r_i (1 - s_i)
		\\
		&=
		\sum_i |p_i - r_i s_i|
		+ 1
		- \sum_{i} r_i s_i
		\\
		&=
		\sum_i |p_i - r_i s_i|
		+ \sum_i (p_i - r_i s_i).
	\end{align*}
	Using the identity $|x| + x = \max\{0,\, 2x\}$, we get
	\begin{align*}
		\norm{\sigma - \chi}
		&= \sum_i \max \{0,\, 2(p_i - r_i s_i) \}.
	\end{align*}
	Writing $u_i = \frac{r_i + s_i}{2}, v_i = \frac{r_i - s_i}{2}$, we have
	\begin{align*}
		\norm{\sigma - \chi}
		&= \sum_i \max \{0,\, 2(p_i - (u_i + v_i)(u_i - v_i)) \}
		\\
		&= \sum_i \max \{0,\, 2(p_i - u_i^2 + v_i^2) \}.
	\end{align*}
	We now find conditions under which this distance is minimized.
	Minimization over $v_i$'s gives all $v_i = 0$, i.e.\ the local states are the same $r_i = s_i = u_i$.
	Hence
	\begin{align}
		Q_{Prod} (\sigma)
		&= \inf_{r_i, s_j} \half \norm{\sigma - \chi}
		\nonumber
		\\
		&= \inf_{u_i} \sum_i \max \{0,\, p_i - u_i^2\}.
		\label{eq:traceDist}
	\end{align}

	We will show that the minimum over $u_i$ is achieved when $\tilde{u}_i = \sqrt{p_i}$ for $i < m$, $\tilde{u}_m =  1 - \sum_{i < m} \sqrt{p_i}$, and zero otherwise.
	Indeed, suppose at first that for some index $i_0$, we have $u_{i_0} > \sqrt{p_{i_0}}$.
	We could then consider another state $\chi'$ given by the coefficients $u'_i$, such that $u'_{i_0} = \sqrt{p_{i_0}}$,
	$u'_{j_0} = u_{j_0 } + u_{i_0} - \sqrt{p_{i_0}}$,
	where $j_0$ is an arbitrary index not equal to $i_0$,
	and $u'_{i} = u_i$ whenever $i_0  \neq i \neq j_0$.
	Then clearly $ \sum_i u'_i = 1$,
	and $\sum_i \max \{0,\, p_i - u_i'^2\} < \sum_i \max \{0,\, p_i - u_i^2\}$.  Hence, for a state that minimizes Eq.~\eqref{eq:traceDist}, we must have $u_i \leq \sqrt{p_i}$ for all $i$.
	The problem of finding the optimal set of coefficients $u_i$ in Eq.~\eqref{eq:traceDist} reduces to maximization of the expression $\sum_i u_i^2$ under the constraints: $0 \leq u_i \leq \sqrt{p_i}$, for all $i$, and $\sum_i u_i = 1$.

	Suppose now that $u_1 < \sqrt{p_1}$.
	We can assume without loss of generality that $u_1>u_2$ as otherwise we would swap the two numbers and the new set of coefficients would still satisfy the constraints because $p_i$'s are assumed to be in non-increasing order.
	Consider a new set $u'_i$, defined as
	$u'_1 = u_1 + \epsilon$, $u'_{2} = u_{2} - \epsilon$,
	where $0 < \epsilon < \sqrt{p_1} - u_1$,
	and $u'_i = u_i$ whenever $i > 2$ (if $u'_2$ is negative we set $u'_2 = 0$ and choose $u'_3 = u_3 - (\epsilon - u_2)$ and so on).
	With this choice $\sum_i {(u'_i)}^2 > \sum_i u_i^2$.
	Therefore, in the maximal case we must have $\tilde{u}_1 = \sqrt{p_1}$.
	By repeating the same argument, $\tilde{u}_i = \sqrt{p_i}$ for all $i<m$.
	It is now easy to see that the maximum for the sum of the remaining coefficients, $\sum_{i\geq m} u_i^2$,
	is achieved for  $\tilde{u}_m = 1 - \sum_{i<m} \sqrt{p_i}$ and
	$\tilde{u}_i = 0$, for $i>m$.

	Thus, Eq.~\eqref{eq:traceDist} becomes
	\begin{align*}
		Q_{Prod} (\sigma)
		&=
		p_m - \tilde{u}_m^2
		+ \sum_{i > m} p_i
		\\
		&=
		1 - \sum_{i < m} p_i
		- {\left(1 - \sum_{i < m} \sqrt{p_i}\right)}^2,
	\end{align*}
	which completes the proof.
\end{proof}

\section{Proof of Theorem~\ref{th:pure-decomposition}}\label{app:pure-decomposition}
Let us first gather the tools necessary to prove Theorem~\ref{th:pure-decomposition}.
We shall repeatedly use the following fact to compute the trace norm of $2 \times 2$ matrices.
	\begin{fact}\label{fact:norm}
		Suppose $A$ is a $2 \times 2$ Hermitian matrix
		\begin{align*}
			A
			&=
			a_0 \identity
			+ a_x \sigma_x
			+ a_y \sigma_y
			+ a_z \sigma_z,
		\end{align*}
		with $a_0, a_x, a_y, a_z \in \mathbb{R}$.

		Then $\norm{A} = 2 \max \left\{ |a_0|, \sqrt{a_x^2 + a_y^2 + a_z^2} \right\}$
	\end{fact}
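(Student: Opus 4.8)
The plan is to diagonalize $A$ explicitly by exploiting the Pauli algebra, read off its two eigenvalues, and then reduce the statement to an elementary identity about absolute values. First I would isolate the traceless part $B = a_x \sigma_x + a_y \sigma_y + a_z \sigma_z$ and compute $B^2$. Using the anticommutation relations $\sigma_i \sigma_j + \sigma_j \sigma_i = 2 \delta_{ij} \identity$, every cross term cancels in pairs and one is left with
\begin{align*}
	B^2 &= \left(a_x^2 + a_y^2 + a_z^2\right) \identity = r^2 \identity,
\end{align*}
where I abbreviate $r = \sqrt{a_x^2 + a_y^2 + a_z^2}$. Since $B$ is Hermitian and satisfies $B^2 = r^2 \identity$, its eigenvalues are exactly $\pm r$.

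Next, because $A = a_0 \identity + B$ and the scalar term $a_0 \identity$ commutes with $B$, the eigenvalues of $A$ are obtained simply by shifting those of $B$, giving $\lambda_\pm = a_0 \pm r$. The trace norm of a Hermitian matrix is the sum of the absolute values of its eigenvalues, so
\begin{align*}
	\norm{A} &= |\lambda_+| + |\lambda_-| = |a_0 + r| + |a_0 - r|.
\end{align*}

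Finally I would invoke the elementary identity $|x + y| + |x - y| = 2 \max\{|x|, |y|\}$ with $x = a_0$ and $y = r$, which yields $\norm{A} = 2 \max\{|a_0|, r\}$, precisely the claimed formula. I expect no genuine obstacle in this argument; the only points requiring a line of care are verifying $B^2 = r^2 \identity$ directly from the Pauli relations and checking the max identity by splitting into the two cases $|x| \geq |y|$ and $|x| < |y|$, both of which are routine.
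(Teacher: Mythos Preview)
Your argument is correct: computing $B^2 = r^2\identity$ from the Pauli anticommutation relations, reading off the eigenvalues $a_0 \pm r$, and applying $|x+y|+|x-y|=2\max\{|x|,|y|\}$ is a clean and complete derivation. The paper itself states this result as a Fact without proof, so there is no argument to compare against; your proposal simply fills in the routine verification the authors omitted.
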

We will use the inequality shown in the following lemma.
\begin{lemma}
	Let $\psi$ be a pure state.
	Let $\chi_S$ be a product state that is diagonal in the Schmidt basis of $\psi$.
	Then
	\begin{align*}
		d_T (\psi, \chi_S)
		&\geq
		N(\psi) + \half C_T(\psi).
	\end{align*}\label{lemma:schmidt}
\end{lemma}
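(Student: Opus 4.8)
The plan is to expand $\pt{\psi}-\pt{\chi_S}$ in the Schmidt basis and exploit that it is block diagonal. Writing $\ket{\psi}=\sum_i\sqrt{p_i}\ket{ii}$ and $\chi_S=\left(\sum_i r_i\proj{i}\right)\otimes\left(\sum_j s_j\proj{j}\right)$, I would first record that $\pt{\psi}=\sum_{ij}\sqrt{p_i p_j}\ket{ij}\bra{ji}$, while $\chi_S$ is diagonal so $\pt{\chi_S}=\chi_S=\sum_{ij}r_i s_j\proj{ij}$. Their difference acts as the scalar $p_i-r_i s_i$ on each line $\mathrm{span}\{\ket{ii}\}$ and as the Hermitian block $M_{ij}=\begin{pmatrix}-r_i s_j & \sqrt{p_i p_j}\\ \sqrt{p_i p_j} & -r_j s_i\end{pmatrix}$ on each plane $\mathrm{span}\{\ket{ij},\ket{ji}\}$ with $i<j$. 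These blocks are mutually orthogonal, so the trace norm is additive across them: $\norm{\pt{\psi}-\pt{\chi_S}}=\sum_i|p_i-r_i s_i|+\sum_{i<j}\norm{M_{ij}}$.

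Next I would bound the two families, keeping in each case only the part that survives in the target inequality. For the off-diagonal blocks, Fact~\ref{fact:norm} gives $\norm{M_{ij}}=2\max\{|a_0|,\sqrt{a_x^2+a_z^2}\}$ with $a_x=\sqrt{p_i p_j}$; discarding $a_0$ and $a_z$ leaves the clean bound $\norm{M_{ij}}\geq 2\sqrt{p_i p_j}$, and summing over $i<j$ reproduces $2N(\psi)$ since $N(\psi)=\sum_{i<j}\sqrt{p_i p_j}$ by Theorem~\ref{th:pure-cc}. For the diagonal terms I would use $|x|\geq\pospart{x}$ to obtain $\sum_i|p_i-r_i s_i|\geq\sum_i\pospart{p_i-r_i s_i}$.

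The final ingredient is to read the diagonal remainder as a witness for $C_T(\psi)$. Applying the computation behind Theorem~\ref{th:cc-product} to the Schmidt-dephased state $\sigma_0=\sum_i p_i\proj{ii}$ --- a closest classically-correlated state to $\psi$ by Theorem~\ref{th:pure-cc} --- shows $d_T(\sigma_0,\chi)=\sum_i\pospart{p_i-r_i s_i}$ for every diagonal product $\chi$, hence $C_T(\psi)=\inf_{r,s}\sum_i\pospart{p_i-r_i s_i}$. Since the coefficients of our given $\chi_S$ are a feasible point of this infimum, $\sum_i\pospart{p_i-r_i s_i}\geq C_T(\psi)$. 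Combining, $2\,d_T(\psi,\chi_S)=\sum_i|p_i-r_i s_i|+\sum_{i<j}\norm{M_{ij}}\geq C_T(\psi)+2N(\psi)$, and dividing by two yields $d_T(\psi,\chi_S)\geq N(\psi)+\half C_T(\psi)$.

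The delicate point is conceptual rather than computational. One must match the diagonal piece to $C_T(\psi)$ and not to $d_T(\sigma_0,\chi_S)$ itself: the latter appears to also carry the off-diagonal weight $\sum_{i\neq j}r_i s_j$, which the Theorem~\ref{th:cc-product} computation absorbs through $\sum_i p_i=1$ to leave only the diagonal defects. The asymmetric factor $\half$ on $C_T(\psi)$ then originates from the fact that each off-diagonal block contributes $2\sqrt{p_i p_j}$ (two eigenvalues $\pm\sqrt{p_i p_j}$) and so feeds the undivided $2N(\psi)$, whereas $C_T(\psi)$ enters only once; retaining exactly $2\sqrt{p_i p_j}$ from $M_{ij}$ --- no more, no less --- is what makes the two sides balance, consistent with the tight maximally entangled case recorded in Theorem~\ref{th:pure-decomposition}, where the discarded contributions vanish at the optimum.
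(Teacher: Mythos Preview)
Your argument is correct and reaches the same intermediate inequality as the paper,
\[
d_T(\psi,\chi_S)\ \geq\ N(\psi)+\tfrac12\, d_T(\sigma_0,\chi_S),
\]
but by a noticeably shorter route. The paper keeps the full estimate $\norm{M_{ij}}\geq 2\max\bigl\{\tfrac{r_is_j+r_js_i}{2},\sqrt{p_ip_j}\bigr\}$, expands the $\max$ via $2\max\{a,b\}=|a+b|+|a-b|$, and then must separately verify that the leftover term $\sum_{ij}\bigl|\tfrac{c_{ij}-\sqrt{p_ip_j}}{2}\bigr|-\tfrac12\bigl[(\sum_i\sqrt{p_i})^2-1\bigr]$ is nonnegative. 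You bypass this by discarding $a_0,a_z$ in each off-diagonal block to get $\norm{M_{ij}}\geq 2\sqrt{p_ip_j}$ directly, and on the diagonal you use $|x|\geq\pospart{x}$ together with the identity $d_T(\sigma_0,\chi_S)=\sum_i\pospart{p_i-r_is_i}$ extracted from the proof of Theorem~\ref{th:cc-product}. The only point to make explicit is that $C_T(\psi)=\inf_{r,s}\sum_i\pospart{p_i-r_is_i}$ relies on the dephasing step in that proof (the closest product state to $\sigma_0$ may be taken diagonal); you cite ``the computation behind Theorem~\ref{th:cc-product}'', which does cover it. What the paper's longer manipulation buys is a slightly more informative residual term, but since that term is immediately discarded anyway, your streamlined version loses nothing for the purposes of the lemma.
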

\begin{proof}
	Let $\ket{\psi} = \sum_i \sqrt{p_i} \ket{ii}$ be the Schmidt decomposition of $\psi$.
	Then $\sigma = \sum_i p_i \proj{ii}$ is the closest classically-correlated state to $\psi$ (see Theorem~\ref{th:pure-cc}).
	Let $\chi_S = \sum_{ij} r_i s_j \proj{ij}$ be an arbitrary product state that is diagonal in the Schmidt basis of $\psi$.
	We show the inequality holds by direct computation.

	By definition, the partial transpose distance is $d_T (\psi, \chi_S) = \half \norm{\pt{\psi} - \pt{\chi_S}}$.
	Notice that $\pt{\psi}$ and $\pt{\chi_S}$ decompose into blocks $\{\Pi_{ii} = \proj{ii}\}$ and ${\{\Pi_{ij} = \proj{ij} + \proj{ji}\}}_{i < j}$, so we can sum the contribution from each block.
	For the $\Pi_{ii}$ blocks, we have
	\begin{align*}
		\bra{ii} \pt{\psi} - \pt{\chi_S} \ket{ii}
		&=
		p_i - r_i s_i,
	\end{align*}
	so their total contribution to the norm is
	\begin{align*}
		\sum_i \norm{\Pi_{ii} \left(\pt{\psi} - \pt{\chi_S}\right)}
		&= \sum_i | p_i - r_i s_i |.
	\end{align*}
	For the $\Pi_{ij}$ blocks, we obtain $2 \times 2$ submatrices
		\begin{align*}
		\Pi_{ij} \left( \pt{\psi} - \pt{\chi_S} \right)
		&=
		\begin{pmatrix}
			-r_i s_j & \sqrt{p_i p_j} \\
			\sqrt{p_i p_j} & -r_j s_i
		\end{pmatrix}
		\\
		&=
		- \left( \frac{r_i s_j + r_j s_i}{2} \right)
		\identity
		+ \sqrt{p_i p_j} \sigma_x
		- \left( \frac{r_i s_j - r_j s_i}{2} \right) \sigma_z.
	\end{align*}
	By Fact~\ref{fact:norm},
	\begin{align*}
		\norm{\Pi_{ij} \left( \pt{\psi} - \pt{\chi_S} \right)}
		&=
		2 \max \left\{
			\frac{r_i s_j + r_j s_i}{2},
			\sqrt{p_i p_j + {\left(\frac{r_i s_j - r_j s_i}{2}\right)}^2}
		\right\}
		\\
		&\geq
		2 \max \left\{
			\frac{r_i s_j + r_j s_i}{2},
			\sqrt{p_i p_j}
		\right\}.
	\end{align*}
The total norm reads
	\begin{align}
		\norm{\pt{\psi} - \pt{\chi_S}}
		&=
		\sum_i \norm{\Pi_{ii} \left( \pt{\psi} - \pt{\chi_S} \right)}
		+ \sum_{i < j} \norm{\Pi_{ij} \left( \pt{\psi} - \pt{\chi_S} \right)}
		\nonumber \\
		&\geq
		\sum_{i} \left| p_i - r_i s_i \right|
		+ \sum_{i < j} 2 \max \left\{ \frac{r_i s_j + r_j s_i}{2}, \sqrt{p_i p_j} \right\}
		\nonumber \\
		&=
		\sum_{ij} \delta_{ij} \left| \sqrt{p_i p_j} - \frac{r_i s_j + r_j s_i}{2} \right|
		+ \sum_{ij} (1 - \delta_{ij}) \max \left\{ \frac{r_i s_j + r_j s_i}{2}, \sqrt{p_i p_j} \right\}.
	\end{align}
Let $c_{ij} = \frac{r_i s_j + r_j s_i}{2}$.
	Using the identity $2\max\{a, b\} = |a+b| + |a-b|$ inside the second sum, we have
	\begin{align}
		\norm{\pt{\psi} - \pt{\chi}}
		&\geq
		\sum_{ij} \delta_{ij} | c_{ij} - \sqrt{p_i p_j} |
		+ \sum_{ij} (1 - \delta_{ij}) \left(
			\frac{c_{ij} + \sqrt{p_i p_j}}{2}
			+ \left| \frac{c_{ij} - \sqrt{p_i p_j}}{2} \right|
		\right)
		\nonumber \\
		&=
		\sum_{ij} (1 + \delta_{ij}) \left|
			\frac{c_{ij} - \sqrt{p_i p_j}}{2}
		\right|
		+ \sum_{ij} (1 - \delta_{ij}) \left(
			\frac{c_{ij} + \sqrt{p_i p_j}}{2}
		\right)
		\nonumber \\
		&=
		\sum_{ij} (1 + \delta_{ij}) \left|
		\frac{c_{ij} - \sqrt{p_i p_j}}{2}
		\right|
		+ \frac{1 - \sum_i r_i s_i}{2}
		+ \frac{{\left( \sum_i \sqrt{p_i} \right)}^2 - 1}{2}
		\nonumber \\
		&=
		\sum_{ij} \left|
			\frac{c_{ij} - \sqrt{p_i p_j}}{2}
		\right|
		+ \sum_{i} \left|
			\frac{r_i s_i - p_i}{2}
		\right|
		+ \frac{1 - \sum_i r_i s_i}{2}
		+ \frac{{\left( \sum_i \sqrt{p_i} \right)}^2 - 1}{2}.
	\end{align}
By direct computation, we have
	\begin{align*}
		\norm{\pt{\psi} - \pt{\sigma}}
		&=
		{\left( \sum_i \sqrt{p_i} \right)}^2 - 1,
		\\
		\norm{\pt{\sigma} - \pt{\chi_S}}
		&=
		\sum_i \left| r_i s_i - p_i \right| + 1 - \sum_i r_i s_i.
	\end{align*}
We thus have
	\begin{align}
		\norm{\pt{\psi} - \pt{\chi_S}}
		&\geq
		\norm{\pt{\psi} - \pt{\sigma}}
		+ \half \norm{\pt{\sigma} - \pt{\chi_S}}
		+ \sum_{ij} \left|
			\frac{c_{ij} - \sqrt{p_i p_j}}{2}
		\right|
		- \frac{{\left( \sum_i \sqrt{p_i} \right)}^2 - 1}{2}.
	\end{align}
In the next step, we show that $ \sum_{ij} \left| c_{ij} - \sqrt{p_i p_j} \right| + 1 - {\left( \sum_i \sqrt{p_i} \right)}^2 \geq 0$.
Since $\sum_{ij} c_{ij} = 1$, we have
	\begin{align*}
		\sum_{ij} \left| c_{ij} - \sqrt{p_i p_j} \right|
		+ 1 - {\left( \sum_i \sqrt{p_i} \right)}^2
		&=
		\sum_{ij} \left| c_{ij} - \sqrt{p_i p_j} \right|
		+ \sum_{ij} c_{ij}
		- \sum_{ij} \sqrt{p_i p_j}
		\\
		&= \sum_{ij}
		\left| c_{ij} - \sqrt{p_i p_j} \right|
		+ \left( c_{ij} - \sqrt{p_i p_j} \right)
		\\
		&= \sum_{ij}
		2 \max \{
			0,
			c_{ij} - \sqrt{p_i p_j}
		\}
		\\
		&\geq 0.
	\end{align*}
For any $\chi_S$ that is diagonal in the Schmidt basis, we therefore obtain
	\begin{align*}
		d_T (\psi, \chi_S)
		&\geq
		d_T (\psi, \sigma)
		+ \half d_T (\sigma, \chi_S).
	\end{align*}
	In the last step we note that the first term on the right-hand side is equal to $N(\psi)$ and the second term is the upper bound on $C_T(\psi)$ as the state $\chi_S$ might not be the closest one to $\sigma$. This completes the proof.
	\end{proof}

We move to the main theorem.

\begin{theorem*}
	Let $\psi$ be a pure state.
	Then
	\begin{align*}
		I_T(\psi)
		&\leq
		N(\psi) + \half C_T(\psi).
	\end{align*}
	If $\psi$ is maximally entangled, then we have equality.
\end{theorem*}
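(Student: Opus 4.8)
The inequality is an upper bound on an infimum, so the plan is to exhibit a single product state that attains it. The natural candidate is the state produced by Theorem~\ref{th:cc-product} as the closest product state to $\sigma = \sum_i p_i \proj{ii}$, namely $\chi = {\left(\sum_{i<m}\sqrt{p_i}\proj{i} + \left(1 - \sum_{i<m}\sqrt{p_i}\right)\proj{m}\right)}^{\otimes 2}$. This $\chi$ is diagonal in the Schmidt basis, has equal marginals ($r_i = s_i = u_i$), and satisfies $d_T(\sigma,\chi) = C_T(\psi)$ by construction. Since $\chi \in \PROD$, I immediately get $I_T(\psi) \le d_T(\psi,\chi)$, and it remains only to show $d_T(\psi,\chi) = N(\psi) + \half C_T(\psi)$.

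To evaluate $d_T(\psi,\chi)$ I would reuse the block-by-block computation already carried out in the proof of Lemma~\ref{lemma:schmidt}, and verify that for this particular $\chi$ both inequalities appearing there are saturated. The first inequality bounds each $\Pi_{ij}$ block by dropping its $\sigma_z$ component $\tfrac{r_is_j - r_js_i}{2}$; equal marginals make this component vanish, so the bound is tight. The second inequality is $\sum_{ij} 2\max\{0,\, c_{ij} - \sqrt{p_ip_j}\} \ge 0$ with $c_{ij} = \tfrac{r_is_j + r_js_i}{2} = u_iu_j$; since the optimal coefficients obey $u_i \le \sqrt{p_i}$ (a property established within the proof of Theorem~\ref{th:cc-product}), we have $u_iu_j \le \sqrt{p_ip_j}$ for all $i,j$, so every term vanishes and this inequality is tight as well. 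With both slacks gone, the chain in Lemma~\ref{lemma:schmidt} collapses to the equality $d_T(\psi,\chi) = d_T(\psi,\sigma) + \half d_T(\sigma,\chi) = N(\psi) + \half C_T(\psi)$, which proves the inequality for every pure state.

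For the equality claim I need the reverse bound $I_T(\Phi) \ge N(\Phi) + \half C_T(\Phi)$ when $\psi = \Phi$ is maximally entangled; equivalently, that no product state does better than the diagonal $\chi$ above. Here I would first reduce the problem using $\pt{\Phi} = \tfrac{1}{d}\,\mathrm{SWAP}$, which turns the optimization into $I_T(\Phi) = \inf_{\rho_A,\rho_B} \half \norm{\tfrac{1}{d}\,\mathrm{SWAP} - \rho_A \otimes \rho_B}$ (the partial transpose of a state again ranges over all states). I would then attack this with the dual formulation of the trace norm: for the Hermitian witness $W = \mathrm{sign}(\pt{\Phi} - \pt{\chi})$ one has $d_T(\Phi,\tau) \ge \half\Tr[W(\pt{\Phi} - \pt{\tau})]$ for every state $\tau$, so it suffices to show that $W$ certifies the bound globally, i.e.\ that $\chi$ maximizes $\tau \mapsto \Tr[W \pt{\tau}] = \Tr[\pt{W}\tau]$ over all product states --- a complementary-slackness condition promoting the local optimum $\chi$ to a global one.

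The main obstacle is precisely this last step: verifying that the product-state maximum of $\pt{W}$ is attained at $\chi$ and not at some non-diagonal product state. What makes it tractable for $\Phi$ is the high symmetry of the problem --- $\Phi$ is invariant under $U \otimes \bar U$ and under diagonal phase rotations, and all these are mixtures of local unitaries, hence PPT operations under which $d_T$ is contractive by Lemma~\ref{lemma:contractive} --- so the witness $W$ inherits the $\mathrm{SWAP}$-symmetric structure ($+1$ on the symmetric subspace, $-1$ on the antisymmetric one, up to the degenerate diagonal), and its largest product expectation can be computed in closed form. For a general pure state this symmetry is absent and the global-optimality check is exactly the content of the numerically-supported conjecture that the equality in~\eqref{EQ_ADD_TRANS} persists.
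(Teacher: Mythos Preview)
Your argument for the inequality is correct and coincides with the paper's: both pick the product state $\chi$ from Theorem~\ref{th:cc-product} and show $d_T(\psi,\chi)=N(\psi)+\tfrac12 C_T(\psi)$. The paper recomputes the block norms from scratch, whereas you observe that for this $\chi$ both inequalities inside the proof of Lemma~\ref{lemma:schmidt} saturate (equal marginals kill the $\sigma_z$ term; $u_i\le\sqrt{p_i}$ forces every $\max\{0,c_{ij}-\sqrt{p_ip_j}\}$ to vanish). That is a clean shortcut and establishes Eq.~\eqref{EQ_EQSUB} just as well.

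Where your proposal falls short is the equality for maximally entangled $\Phi$. The dual-witness route is a legitimate strategy, but you do not actually carry it out, and your heuristic that $W=\mathrm{sign}(\pt{\Phi}-\pt{\chi})$ ``inherits the SWAP-symmetric structure'' is not quite right: the optimal $\chi$ from Theorem~\ref{th:cc-product} is \emph{not} the maximally mixed state (for $p_i=1/d$ one gets $u_i=1/\sqrt{d}$ only for $i<m$ with $m=\lfloor\sqrt{d}\rfloor+1$, then a truncated tail), so $\chi$---and hence $W$---is not invariant under $U\otimes\bar U$. Showing that $\chi$ globally maximises $\tau\mapsto\Tr[\pt{W}\tau]$ over product states therefore still requires real work that your sketch does not supply.

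The paper closes this gap by a more direct symmetry reduction, which in fact dovetails with the lemma you already used. First apply a local unitary $U\otimes U^*$ (preserving $\Phi$) to diagonalise $\chi_A$ in the Schmidt basis; then apply the block projection $\Pi=\sum_i\Pi_{ii}(\cdot)\Pi_{ii}+\sum_{i<j}\Pi_{ij}(\cdot)\Pi_{ij}$, which leaves $\pt{\Phi}$ invariant, is trace-norm contractive, and sends $\pt{\chi}$ to the partial transpose of a product state diagonal in the Schmidt basis. This shows the infimum in $I_T(\Phi)$ is attained on Schmidt-diagonal product states, and Lemma~\ref{lemma:schmidt} then gives $I_T(\Phi)\ge N(\Phi)+\tfrac12 C_T(\Phi)$ immediately. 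Since you already have Lemma~\ref{lemma:schmidt} in hand, this two-line reduction is considerably shorter than constructing and verifying a global dual certificate.
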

\begin{proof}
	Let us write $\psi$ in its Schmidt basis $\ket{\psi} = \sum_i \sqrt{p_i} \ket{ii}$ and let us define
	\begin{align}
		m
		&= \max{\left\{
				n\,|\, \sum_{i < n} \sqrt{p_i} \leq 1
		\right\}} \nonumber \\
		\chi
		&= {\left(
			\sum_{i < m} \sqrt{p_i} \proj{i}
			+ \left(1 - \sum_{i < m} \sqrt{p_i}\right) \proj{m}
		\right)}^{\otimes 2}
		\nonumber \\
		\sigma_0
		&= \sum_i p_i \proj{ii}.
	\end{align}
	Recall that Theorem~\ref{th:pure-cc} showed that $\sigma_0$ is the closest classically-correlated state to $\psi$
	and Theorem~\ref{th:cc-product} showed that $\chi$ is the closest product state to $\sigma_0$.
	We now prove a strict equality
	\begin{align}
		d_T (\psi, \chi)
		&=
		d_T (\psi, \sigma_0)
		+ \half d_T (\sigma_0, \chi).
		\label{EQ_EQSUB}
	\end{align}
	and the final inequality follows by noting that the state $\chi$ might not be the closest product state to the pure state $\psi$.

	First, notice that $\pt{\psi}$ decomposes as
	\begin{align*}
		\pt{\psi}
		&=
		\sum_i p_i \proj{ii}
		+ \sum_{i \neq j} \sqrt{p_i p_j} \ket{ij}\bra{ji},
	\end{align*}
	with blocks $\{ \Pi_{ii} = \proj{ii} \}$ and ${\{ \Pi_{ij} = \proj{ij} + \proj{ji} \}}_{i < j}$.
	Furthermore, $\chi$ is also diagonal in the basis $\{ \ket{ij} \}$, and $\pt{\chi} = \chi$.
	Thus to compute $\norm{\pt{\psi} - \pt{\chi}}$, we can compute the contributions from each block, and then sum them all up.

	First, we look at the $\Pi_{ii}$ blocks.
	We have
	\begin{align*}
		\bra{ii} \left( \pt{\psi} - \pt{\chi} \right) \ket{ii}
		&=
		p_i - \bra{ii} \chi \ket{ii}
		\\
		&=
		\begin{cases}
			0, & \text{for}\, i < m,
			\\
			p_m - {\left(1 - \sum_{i < m} \sqrt{p_i}\right)}^2, & \text{for}\, i = m,
			\\
			p_i, & \text{for}\, i > m.
		\end{cases}
	\end{align*}
	Since $\sqrt{p_m} > 1 - \sum_{i < m} \sqrt{p_i}$ by the definition of $m$, we must have $p_m - {\left( 1 - \sum_{i < m} \sqrt{p_i} \right)}^2 \geq 0$.
	So the total contribution to the norm from the $\Pi_{ii}$ blocks is
	\begin{align*}
		\sum_i \norm{\proj{ii} \left( \pt{\psi} - \pt{\chi} \right) \proj{ii}}
		&=
		\sum_{i \geq m} p_i - {\left(1 - \sum_{i < m} \sqrt{p_i}\right)}^2.
	\end{align*}

	In the $\Pi_{ij}$ blocks, we have the following sub-matrix
	\begin{align*}
		\Pi_{ij} \left( \pt{\psi} - \pt{\chi} \right)
		&=
		\begin{pmatrix}
			- \bra{ij} \chi \ket{ij}
			& \sqrt{p_i p_j}
			\\
			\sqrt{p_i p_j}
			& - \bra{ji} \chi \ket{ji}
		\end{pmatrix}.
	\end{align*}
	Since $\bra{ij} \chi \ket{ij}$ depends on $i, j$, we look at each case individually.
	\begin{itemize}
		\item $i < j < m$:

		We have $\bra{ij} \chi \ket{ij} = \bra{ji} \chi \ket{ji} = \sqrt{p_i p_j}$, and hence $\norm{\Pi_{ij} \left( \pt{\psi} - \pt{\chi} \right)} = 2 \sqrt{p_i p_j}$.

		\item $i < m, j = m$:

		We have $\bra{im} \chi \ket{im} = \bra{mi} \chi \ket{mi} = \sqrt{p_i} \left( 1 - \sum_{i < m} \sqrt{p_i} \right)$.
		The $\Pi_{im}$ sub-matrix reads
		\begin{align*}
			\Pi_{im} \left( \pt{\psi} - \pt{\chi} \right)
			&=
			\begin{pmatrix}
				- \sqrt{p_i} \left( 1 - \sum_{i < m} \sqrt{p_i} \right)
				& \sqrt{p_i p_m}
				\\
				\sqrt{p_i p_m}
				& - \sqrt{p_i} \left( 1 - \sum_{i < m} \sqrt{p_i} \right)
			\end{pmatrix}
			\\
			&= \sqrt{p_i}
			\begin{pmatrix}
				- \left( 1 - \sum_{i < m} \sqrt{p_i} \right)
				& \sqrt{p_m}
				\\
				\sqrt{p_m}
				& - \left( 1 - \sum_{i < m} \sqrt{p_i} \right)
			\end{pmatrix}.
		\end{align*}
		We get $\norm{\Pi_{im} \left( \pt{\psi} - \pt{\chi} \right)} = 2 \sqrt{p_i p_m}$.

		\item $i = m, j > m$:

		We have $\bra{mj} \chi \ket{mj} = \bra{jm} \chi \ket{jm} = 0$.
		Thus
		\begin{align*}
			\Pi_{mj} \left( \pt{\psi} - \pt{\chi} \right)
			&=
			\begin{pmatrix}
				0 & \sqrt{p_m p_j} \\
				\sqrt{p_m p_j} & 0
			\end{pmatrix},
		\end{align*}
		and $\norm{\Pi_{mj} \left( \pt{\psi} - \pt{\chi} \right)} = 2\sqrt{p_m p_j}$.

		\item $i > m, j > m$:

		We have $\bra{ij} \chi \ket{ij} = \bra{ji} \chi \ket{ji} = 0$, and $\norm{\Pi_{ij} \left( \pt{\psi} - \pt{\chi} \right)} = 2\sqrt{p_i p_j}$.

	\end{itemize}
	Summarizing, in all the cases we have $\norm{\Pi_{ij} \left( \pt{\psi} - \pt{\chi} \right)} = 2 \sqrt{p_i p_j}$.

	Therefore,
	\begin{align}
		\norm{\pt{\psi} - \pt{\chi}}
		&=
		\sum_i \norm{\Pi_{ii} \left( \pt{\psi} - \pt{\chi} \right)}
		+ \sum_{i < j} \norm{\Pi_{ij} \left( \pt{\psi} - \pt{\chi} \right)}
		\nonumber \\
		&=
		\sum_{i \geq m} p_i - {\left(1 - \sum_{i < m} \sqrt{p_i}\right)}^2
		+ \sum_{i < j} 2 \sqrt{p_i p_j}
		\nonumber \\
		&=
		1 - \sum_{i < m} p_i - {\left(1 - \sum_{i < m} \sqrt{p_i}\right)}^2
		+ \sum_{i < j} 2 \sqrt{p_i p_j}.
	\end{align}
	Note that $\sum_{i < j} 2 \sqrt{p_i p_j} = 2 d_T (\psi, \sigma_0)$ and $ 1 - \sum_{i < m} p_i - {\left(1 - \sum_{i < m} \sqrt{p_i}\right)}^2 = d_T (\sigma_0, \chi)$.
	We therefore obtain Eq.~\eqref{EQ_EQSUB}.

	To prove equality for maximally entangled states, our strategy is to use the result of Lemma~\ref{lemma:schmidt}.
	It shows that the distance from $\psi$ to a product state diagonal in the Schmidt basis is lower bounded by some combination of negativity and classical correlations in $\psi$.
	On the other hand, the theorem shows that the distance from $\psi$ to its closest product state is upper bounded by the same combination of negativity and classical correlations.
	Accordingly, showing that the closest product state is diagonal in the Schmidt basis proves the anticipated equality.

	We proceed as follows. Since $\psi$ is maximally entangled, there exists local unitaries $U, V$ such that $( U \otimes V ) \psi = \Phi$, where $\ket{\Phi} = \sum_i \frac{1}{\sqrt{d}} \ket{ii}$.
	The partial transpose distance is invariant under local unitaries, so we must have $I_T ( \psi ) = I_T (\Phi)$.
	Suppose $\chi = \chi_A \otimes \chi_B$ is the closest product state to $\Phi$.
	Since $( U \otimes U^* ) \Phi = \Phi$ for any local unitary $U$, we perform operation $U \otimes U^*$ that makes $\chi_A = \sum_i a_i \proj{i}$ diagonal in the Schmidt basis of $\Phi$. The closest product state written in the Schmidt basis now reads:
	\begin{align*}
		\chi^{T_B}
		&=
		\sum_{ijk} a_i b_{jk} \ket{ij} \bra{ik}.
	\end{align*}
	We show a projective map $\Pi$ that preserves the partial transposition of $\Phi$, dephases the closest product state in the Schmidt basis, and preserves the product structure.
	The contractivity of trace norm~\cite{RUSKAI_1994} under positive trace-preserving maps then shows that the distance to $\Phi$ cannot increase:
	\begin{align}
	I_T (\Phi) = \frac{1}{2} || \Phi^{T_B} - \chi^{T_B} ||_1 \ge \frac{1}{2} || \Pi(\Phi^{T_B}) - \Pi(\chi^{T_B}) ||_1 = \frac{1}{2} || \Phi^{T_B} - \Pi(\chi^{T_B}) ||_1.
	\end{align}
	Accordingly, there always exists a closest product state that is diagonal in the Schmidt basis.

	The relevant projectors are as follows: $\Pi_{ii} = \proj{ii}$ and $\Pi_{ij} = \proj{ij} + \proj{ji}$.
	Indeed the matrix $\Phi^{T_B}$ is preserved by this operation whereas the product state remains product and is dephased:
	\begin{align}
		\sum_i \Pi_{ii} (\chi^{T_B})
		+ \sum_{i < j} \Pi_{ij} (\chi^{T_B})
		&=
		\left(
			\sum_i a_i \proj{i}
			\right) \otimes \left(
			\sum_j b_{jj} \proj{j}
		\right).
	\end{align}

	This completes the proof.
\end{proof}

\bibliographystyle{apsrev4-2}
\bibliography{refs.bib}

\end{document}